\DeclareFontFamily{U}{mathx}{\hyphenchar\font45}
\DeclareFontShape{U}{mathx}{m}{n}{
      <5> <6> <7> <8> <9> <10>
      <10.95> <12> <14.4> <17.28> <20.74> <24.88>
      mathx10
      }{}
\DeclareSymbolFont{mathx}{U}{mathx}{m}{n}
\DeclareMathAccent{\widecheck}{0}{mathx}{"71}
\DeclareMathAccent{\wideparen}{0}{mathx}{"75}
\def\ladderwidth{2.8mm}
\def\ladderstep{3.8mm}
\tikzset{
  ladder/.style={decorate,decoration={
      markings,
      mark=between positions {1/#1/2} and {-1/#1/2} step {1/#1} with {
        \draw[thick,black]
        (0,-\ladderwidth) -- (0,\ladderwidth)
        (\ladderstep,-\ladderwidth) -- (-\ladderstep,-\ladderwidth)
        (\ladderstep,\ladderwidth) -- (-\ladderstep,\ladderwidth);
      }
    },
  },
  XOR/.style={draw,circle,append after command={
      [shorten >=\pgflinewidth, shorten <=\pgflinewidth,]
      (\tikzlastnode.north) edge (\tikzlastnode.south)
      (\tikzlastnode.east) edge (\tikzlastnode.west)
      }
    },
  mmt/.style = {regular polygon, regular polygon sides=3,
            draw, fill=white,
            inner sep=0.9mm, outer sep=0mm,
            shape border rotate=-90},
  bigmmt/.style = {regular polygon, regular polygon sides=3,
            draw, fill=white,
            inner sep=2mm, outer sep=0mm,
            shape border rotate=-90},
  operator/.style = {draw,fill=white,minimum size=1.5em},
  phase/.style = {fill,shape=circle,minimum size=5pt,inner sep=0pt},
  not/.style = {shape=circle,minimum size=5pt,inner sep=0pt},
  surround/.style = {fill=blue!10,thick,draw=black,rounded corners=2mm},
}
\algrenewcommand\alglinenumber[1]{{\sffamily\footnotesize#1}}
\newsavebox\myboxA
\newsavebox\myboxB
\newlength\mylenA
\newcommand*\xoverline[2][0.75]{%
    \sbox{\myboxA}{$\m@th#2$}%
    \setbox\myboxB\null% Phantom box
    \ht\myboxB=\ht\myboxA%
    \dp\myboxB=\dp\myboxA%
    \wd\myboxB=#1\wd\myboxA% Scale phantom
    \sbox\myboxB{$\m@th\overline{\copy\myboxB}$}%  Overlined phantom
    \setlength\mylenA{\the\wd\myboxA}%   calc width diff
    \addtolength\mylenA{-\the\wd\myboxB}%
    \ifdim\wd\myboxB<\wd\myboxA%
       \rlap{\hskip 0.5\mylenA\usebox\myboxB}{\usebox\myboxA}%
    \else
        \hskip -0.5\mylenA\rlap{\usebox\myboxA}{\hskip 0.5\mylenA\usebox\myboxB}%
    \fi}
\def\natural{\mathbb{N}}
\def\field{\mathbb{F}}
\def\union{\cup}
\let\emptyset\varnothing
\def\conj{^{\dagger}}
\def\to{\rightarrow}
\newcommand{\ip}[2]{\langle #1, #2 \rangle}
\newcommand{\1}{\mathbb{1}}
\renewcommand{\dim}[1]{\operatorname{dim}\left( #1 \right)}
\renewcommand{\deg}{\sansserif{deg}}
\newcommand{\comp}[1]{\check{#1}}
\def\half{\frac{1}{2}}
\def \lket {\left|}
\def \rket {\right\rangle}
\def \lbra {\left\langle}
\def \rbra {\right|}
\newcommand{\ket}[1]{\lket\mspace{0.5mu} #1 \mspace{0.5mu}\rket}
\newcommand{\encket}[1]{\xoverline{\ket{#1}}}
\newcommand{\bra}[1]{\lbra\mspace{0.5mu} #1 \mspace{0.5mu}\rbra}
\newcommand{\encbra}[1]{\xoverline{\bra{#1}}}
\newcommand{\kb}[1]{\ket{#1}\bra{#1}}
\def\css{\sansserif{CSS}}
\def\A{\mathcal{A}}
\def\B{\mathcal{B}}
\def\C{\mathcal{C}}
\def\E{\mathcal{E}}
\def\G{\mathcal{G}}
\def\K{\mathcal{K}}
\def\M{\mathcal{M}}
\def\P{\mathcal{P}}
\def\R{\mathcal{R}}
\def\T{\mathcal{T}}
\def\V{\mathcal{V}}
\def\X{\mathcal{X}}
\def\Y{\mathcal{Y}}
\def\Z{\mathcal{Z}}
\def\a{\operatorname{A}}
\def\g{\operatorname{G}}
\def\i{\operatorname{I}}
\def\j{\operatorname{J}}
\def\h{\operatorname{H}}
\def\tC{\widetilde{\C}}
\def\th{\widetilde{\operatorname{H}}}
\def\uhat{\hat{u}}
\def\Pr{\operatorname{Pr}}
\def\avg{\mathbb{E}}
\newtheorem{theorem}{Theorem}
\newtheorem*{theorem*}{Theorem}
\newtheorem{lemma}[theorem]{Lemma}
\newtheorem{corollary}[theorem]{Corollary}
\newtheorem{definition}[theorem]{Definition}
\def\path{\operatorname{P}}
\newcommand{\sansserif}[1]{%
  \ifmmode
  \mathsf{#1}%
  \else
   \textsf{#1}%
  \fi
}
\def\ev{\sansserif{ev}}
\def\sfq{\sansserif{Q}}
\def\sfu{\sansserif{U}}
\def\sfut{\widetilde{\sansserif{U}}}
\def\sfw{\sansserif{W}}
\def\sfv{\sansserif{V}}
\def\cz{\sansserif{cZ}}
\def\cleq{\preccurlyeq}
\renewcommand{\int}[2]{\Delta_{#1} \G'_{#2}}
\def\sfu{\sansserif{U}}
\newcommand{\pr}[1]{\sansserif{Pr}\left( #1 \right)}
\title{Magic state distillation with punctured polar codes}
\author[1]{Anirudh Krishna}
\author[2]{Jean-Pierre Tillich}
\affil[1]{
  Universit\'e de Sherbrooke\\
  2500 Boulevard de l'Universit\'e\\
  Sherbrooke, QC J1K 2R1, Canada
  }
\affil[2]{
  Inria, Team SECRET\\
  2 Rue Simone IFF, CS 42112\\
  75589 Paris Cedex 12, France
  }
\date{}
\begin{document}
\maketitle
\begin{abstract}
  We present a scheme for magic state distillation using punctured polar codes.
  Our results build on some recent work by Bardet et al. \cite{bardet2016algebraic} who discovered that polar codes can be described algebraically as decreasing monomial codes.
  Using this powerful framework, we construct tri-orthogonal codes \cite{bravyi2012magic} that can be used to distill magic states for the $T$ gate.
  An advantage of these codes is that they permit the use of the successive cancellation decoder whose time complexity scales as $O(N\log(N))$.
  We supplement this with numerical simulations for the erasure channel and dephasing channel.
  We obtain estimates for the dimensions and error rates for the resulting codes for block sizes up to $2^{20}$ for the erasure channel and $2^{16}$ for the dephasing channel.
  The dimension of the triply-even codes we obtain is shown to scale like $O(N^{0.8})$ for the binary erasure channel at noise rate $0.01$ and $O(N^{0.84})$ for the dephasing channel at noise rate $0.001$.
  The corresponding bit error rates drop to roughly $8\times10^{-28}$ for the erasure channel and $7 \times 10^{-15}$ for the dephasing channel respectively.
\end{abstract}
\section{Introduction}

Implementations of quantum circuits are imperfect and prone to error.
In order to realize scalable quantum computers, we need to construct quantum circuits capable of working with unreliable components.
This is the focus of the domain of fault-tolerant quantum computation \cite{shor1996fault,aharonov1997fault,kitaev1997quantum,knill1998resilient}.
The premise behind this theory is to encode quantum information using \emph{quantum error correcting codes} which serve as a buffer against noise.

To process encoded information, we require some way to perform logical operations without unencoding it and thereby leaving it vulnerable to errors.
Of particular interest is the technique called state injection, a scheme to inject special ancilla states called magic states into a quantum circuit \cite{bravyi2005universal}.
These states must undergo a resource intensive purification process called magic state distillation to ensure that they have high fidelity.
Current estimates state that a large fraction of qubits required for quantum computation will have to be dedicated to this process \cite{jones2013multilevel,o2017quantum}.
It is therefore imperative to reduce the overhead required by magic state distillation as this is a bottleneck.

We shall focus on magic state distillation protocols to distill the state $\ket{A} = (\ket{0} + e^{i\pi/4}\ket{1})/\sqrt{2}$.
This state can be used to inject the $T$ gate, where 
\begin{align}
  \label{eq:defT}
  T = 
  \begin{pmatrix}
    1 & 0\\
    0 & e^{i\pi/4}
  \end{pmatrix}~.
\end{align}
Together with gates from the Clifford group, this gate can be used to implement a universal set of gates.
There are several magic state distillation protocols to distill the $\ket{A}$ state such as \cite{bravyi2005universal,hastings2017distillation,haah2017codes} that use quantum error correcting codes such as the Reed-Muller code with special symmetry properties.
These symmetries imply that applying transversal, physical $T$ gates to the physical qubits of the quantum error correcting code implements the logical $T$ gate on the encoded qubits.
Decoding the quantum error correcting code then results in the magic state $\ket{A}$.
The framework of tri-orthogonality defined by Bravyi and Haah \cite{bravyi2012magic} extends these symmetries to describe entire code families.
This work also defined a protocol to distill states using tri-orthogonal codes.
One key feature of the Bravyi-Haah protocol is that it relies on post-selection, i.e. the states at the end of the protocol are accepted if and only if a certain measurement outcome is obtained; if not, the states are discarded.
The probability of obtaining this measurement outcome decreases as the distance of the quantum error correcting code increases.

In this paper, we present a technique to distill the state $\ket{A}$ using punctured polar codes.
We shall show that these codes are tri-orthogonal and can encode a growing number of logical qubits similar to \cite{hastings2017distillation}.
Rather than use post selection, we modify the Bravyi-Haah protocol and use Steane error correction to correct errors and decode our code.
This implies a trade-off between the rate of distillation and the error rate the scheme can tolerate.

Discovered by Arikan \cite{arikan2009channel}, polar codes were the first family of codes that were shown to efficiently achieve the capacity of a binary discrete memoryless channel (B-DMC).
These codes come equipped with an iterative decoder called the successive cancellation (SC) decoder whose decoding complexity scales as $O(N\log(N))$ for a code of block-size $N$.
We build on some recent results by Bardet et al. wherein polar codes are cast as \emph{decreasing monomial codes} \cite{bardet2016algebraic}.
These algebraic tools are central to demonstrating the existence of tri-orthogonal polar codes.
We hope these tools will be useful in the study of quantum error correcting codes in general.

Notably, our magic state distillation has a low decoding complexity as the quantum codes inherit the SC decoder whose time complexity is $O(N\log(N))$.
We supplement our analysis with some numerical results which indicate upper bounds on the size of the resulting codes and their error rates for block sizes of interest.

Although Reed-Muller codes are closely related to polar codes, the SC decoder for polar codes has a much better error correction capacity than the majority logic decoding algorithm for Reed-Muller codes \cite{dumer2017recursive,arikan2010survey}.
Successive and iterative decoders exist for the Reed-Muller code as well, but again correct far fewer errors than their polar code counterparts \cite{arikan2008performance}.
For this reason, using the distance of the polar code to characterize its error correction capacity can be misleading.

\textbf{Related work:}
Polar codes have been generalized to the quantum realm \cite{renes2012efficient, renes2014polar, renes2015efficient, wilde2012quantum, wilde2013CQ, wilde2013degradable, wilde2013towards} but were studied from the perspective of quantum Shannon theory.
These authors were not concerned with fault tolerance and thus did not explore transversal gates on quantum codes.

There also exist schemes to distill states to implement gates besides the $T$ gate \cite{reichardt2005quantum,meier2012magic,haah2017magic1,haah2017magic2} and also complex schemes that combine both distillation and compilation such as \cite{landahl2013complex,duclos2015reducing,campbell2017unifying,campbell2017unified}.

Finally, magic state distillation can be performed using qudits ($d$ dimensional quantum systems) rather than qubits \cite{anwar2012qutrit,campbell2012magic,campbell2014enhanced,krishna2018towards}.
These schemes have the potential to reduce the overhead associated with magic state disillation.

\textbf{Outline:} This paper is structured as follows.
In section \ref{sec:background}, we define binary polar codes, and proceed to describe the algebraic formalism discovered by Bardet et al. \cite{bardet2016algebraic}.
For the sake of completeness, we review the tri-orthogonality condition and its application to magic states.
In section \ref{sec:distillationprocedure}, we discuss our distillation procedure which assumes that the error correcting code is subject to erasure noise or dephasing noise.
We present our results in section \ref{sec:triorthogonalcodesfrompolarcodes}.
First, we compute the dimension of the triply-even codes constructed from polar codes and plot this parameter as a function of the block-size in \ref{subsec:dim}.
Second, we show that it is possible to construct tri-orthogonal codes from polar codes that achieve good performance for erasure channels and for dephasing channels.

\section{Background}
\label{sec:background}
\subsection{Polar codes}
\label{subsec:polarcodes}
We begin by briefly reviewing the theory of polar codes.
Throughout the paper, we let $n \in \natural$ denote a natural number and $N = 2^{n}$.
Suppose we wish to transmit a message $x \in \field_2^N$ across $N$ copies of a B-DMC $\sfw: \X \to \Y$, where $\X = \field_2$ and $\Y$ is some output alphabet.
The fundamental problem of coding theory is to deduce the input word $x$ having received a corrupted word $y \in \Y^N$.
Arikan's polarization technique reduces the task of the decoder to an iterative process.

For a given channel $\sfw$, block size $N$ and error rate $\epsilon \in [0,1]$, the binary polar code $\C = \C(N,\epsilon)$ is specified by a set $\A \subseteq \{1,...,N\}$, where $|\A| = K$ is the dimension of $\C$ and $\epsilon \in [0,1]$ is an upper bound on the bit error rate of the code.
We shall discuss how to obtain the set $\A$ shortly and begin by discussing the encoding and decoding process assuming this set has been provided.
The information we wish to transmit is stored in a vector $u \in \field_2^N$ where for $a \in \A$, the indices $u_a$ carry information and the other indices are said to be \emph{frozen}, i.e. for $b \in \A^c$, $u_b = 0$.
At the core of the encoding process is the $2 \times 2$ matrix $F$ defined as
\begin{align}
  F =
  \begin{pmatrix}
    1 & 1\\
    0 & 1
  \end{pmatrix}~.
\end{align}
Given an input word $u \in \field_2^{N}$, the encoder maps it to $x \in \field_2^N$, where $x = F^{\otimes n} u$.
This word $x$ is then transmitted across $N$ copies of the noisy channel $\sfw$ resulting in the corrupted word $y$.

The decoder is provided the set $\A$ ahead of time.
Given $y$, it deduces the bits of $u$ sequentially (from $1$ to $N$) treating the bits that it has not yet decoded as noise.
At the $a$-th iteration, it sees the synthetic channel $\sfw^{(a)}$ where
\begin{align*}
  \sfw^{(a)}(y,u_1,\cdots u_{a-1}|u_{a}) = \frac{1}{2^N}\sum_{u_{a+1},...,u_{N}}\sfw(y_1|x_1)\cdots\sfw(y_N|x_N)~.
\end{align*}
It uses this formula to compute the log likelihood ratios $\lambda_a$ where
\begin{align*}
  \lambda_a = \log_2 \frac{\sfw^{(a)}(y,u_1,\cdots,u_{a-1}|0)}{\sfw^{(a)}(y,u_1,\cdots,u_{a-1}|1)}~.
\end{align*}
Having computed the $a$-th log-likelihood ratio, it estimates the $a$-th bit as
\begin{align*}
  u_a =
  \begin{cases}
    0 &\mbox{if } \lambda_a > 0 \mbox{ or } a \in \A^c\\
    1 &\mbox{if } \lambda_a < 0\\
  \end{cases}~.
\end{align*}
Although it is not evident from this presentation, this computation can be performed in $O(N\log(N))$ steps.

It can be shown that the probability of error of the synthetic channels can be upper bounded using the \emph{Bhattacharyya parameter} $\B(\sfw)$ defined for a B-DMC $\sfw: \field_2 \to \Y$ as
\begin{align}
  \B(\sfw) := \half \sum_{y \in \Y} \sqrt{\sfw(y|0)\sfw(y|1)}~.
\end{align}
The Bhattacharyya parameter obeys $0 \leq \B(\sfw) \leq 1$, with $\B(\sfw) = 0$ indicating a noiseless channel and $\B(\sfw) = 1$ indicating a totally unreliable channel.

Exact code construction for a polar code $\C = \C(N,\epsilon)$ whose bit error rate is upper bounded by $\epsilon \in [0,1]$ employs the Bhattacharyya parameters to deduce the set $\A$.
For each index $i \in \{1,...,N\}$, we can evaluate $\B(\sfw^{(a)})$ by transmitting the all zero codeword $0^N$ across the channel $\sfw^{N}$.
Given a threshold $\epsilon \in [0,1)$, we can construct the polar code $\C$ by choosing those indices $a$ such that $\B(\sfw^{(a)}) \leq \epsilon$ to constitute the set $\A$.

Two channels we shall be interested in are the erasure channel and the binary symmetric channel.
When a symbol is erased, we replace it with a special symbol $\perp$.
For ease of representation, we denote a probability distribution $\{\pr{0},\pr{1},\pr{\perp}\}$ over symbols in $\field_2 \union \{\perp\}$ as $\pr{0}[0] + \pr{1}[1] + \pr{\perp}[\perp]$.

For $p \in [0,1]$, the action of the erasure channel, henceforth denoted $\sfw_p$, on $x \in \field_2$ is then
\begin{align*}
  \sfw_p(x) = (1-p)[x] + p[\perp]~,
\end{align*}
which is equivalent to saying that the conditional probability of obtaining $x$ at the output given that $x$ was transmitted is $1-p$ and the probability of seeing $\perp$ is $p$.
Similarly, for $p \in [0,1]$, the action of the binary symmetric channel, henceforth denoted $\sfv_p$, on $x \in \field_2$ is then
\begin{align*}
  \sfv_p(x) = (1-p)[x] + p[\bar{x}]~,
\end{align*}
where $\bar{x} = x+1 \pmod{2}$.

\subsection{Decreasing monomial codes}
\label{subsec:decreasingmonomialcodes}
The drawback with the above construction is that it is not universal; the frozen channels have to be recomputed for each B-DMC $\sfw$.
Although optimizing the polar code this way permits us to provably achieve the capacity efficiently, it does not provide any insight into which synthetic channels are used to transmit information and which synthetic channels are frozen.
This is also a problem in practice because exact code construction is difficult for most channels.
Devising good algorithms for code construction is non-trivial and has been the subject of considerable research \cite{mori2009performance}, \cite{pedarsani2011construction}, \cite{tal2013construct}.

Recently, Bardet et al. \cite{bardet2016algebraic} discovered that there exists an algebraic framework describing polar codes which sheds some light on which channels carry information and which channels are frozen.
We shall exploit this algebraic structure in order to construct tri-orthogonal quantum codes in the following section.
Here we shall briefly review relevant concepts from \cite{bardet2016algebraic} required to discuss magic state distillation.
For a complete development, including proofs of claims in this section, we refer the reader to the original paper by Bardet et al. \cite{bardet2016algebraic}.

Let $n \in \natural$ be some natural number and $\R_n$ be the ring of polynomials over $n$ variables defined as
\[
  \R_n = \field_2[x_0,...,x_{n-1}]/(x^2_0 - x_0,...,x^2_{n-1}-x_{n-1})~.
\]

Any monomial $\alpha \in \R_n$ can be expressed as $x_0^{a_0}...x_{n-1}^{a_{n-1}}$ for some exponents $a_0,...,a_{n-1} \in \field_2$.
The set of all monomials is denoted $\M_n$ where
\begin{align*}
  \M_n = \{x_0^{a_0}...x_{n}^{a_{n-1}}| a_0,...,a_{n-1} \in \field_2\}~.
\end{align*}
Equivalently, we may directly identify the monomial $\alpha$ with the exponents of $x_i$ as $(a_0,...,a_{n-1})$.
The degree of the monomial $\alpha$ is the number of non-zero exponents and is denoted $\deg(\alpha)$.

We can also uniquely identify $\alpha$ with a vector over $\field_2^{N}$ by evaluating it over the field $\field_2^{n}$.
Define the evaluation map $\ev:\R_n \to \field_{2}^{N}$ where
\begin{align}
  \ev(\alpha) = \{\alpha(u)\}_{u \in \field_{2}^{n} }~,
\end{align}
which denotes the vector each of whose components is obtained by evaluating $\alpha$ on each of the $N$ points in $\field_{2}^{n}$.
\begin{definition}[\textbf{Monomial code}]
  Let $\i \subseteq \M_{n}$ be a set of monomials in $n$ variables.
  The corresponding monomial code $\C(\i) \subseteq \field_{2}^{N}$ is defined as
  \begin{align}
    \C(\i) := \operatorname{span}\{\ev(\alpha)|\alpha \in \i \}~.
  \end{align}
\end{definition}

\begin{lemma}
  For $\i \subseteq \M_n$, the dimension of the code $\C(\i)$ is equal to $|\i|$.
\end{lemma}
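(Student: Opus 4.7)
The plan is to establish the stronger claim that the evaluation map $\ev: \R_n \to \field_2^N$ is a bijection of $\field_2$-vector spaces, from which the lemma follows immediately by $\field_2$-linearity of $\ev$ restricted to $\i$. Since $|\M_n| = 2^n = N = \dim{\field_2^N}$, and $\M_n$ spans $\R_n$ over $\field_2$ by definition of $\R_n$ as a quotient where $x_i^2 = x_i$, a dimension count reduces matters to showing that $\{\ev(\alpha) : \alpha \in \M_n\}$ is linearly independent in $\field_2^N$.

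First I would index the monomials by subsets of $\{0,\ldots,n-1\}$, writing each $\alpha \in \M_n$ as $m_S = \prod_{i \in S} x_i$, and similarly index the evaluation points by their supports $T(u) = \{i : u_i = 1\}$. A one-line computation then shows that $m_S$ evaluated at the point with support $T$ equals $\prod_{i \in S} u_i$, which is $1$ if $S \subseteq T$ and $0$ otherwise.

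Next I would examine the $2^n \times 2^n$ evaluation matrix $M$ with rows indexed by $S$ and columns indexed by $T$, so that $M_{S,T} = 1$ when $S \subseteq T$ and $M_{S,T} = 0$ otherwise (this is the zeta matrix of the Boolean lattice). Ordering the subsets by increasing cardinality, with an arbitrary tiebreaker within each cardinality class, $M$ becomes upper triangular with $1$'s on the diagonal: the diagonal entries $M_{S,S}$ are all $1$, and if $T$ strictly precedes $S$ in this order then $|T| \leq |S|$ with $T \neq S$, which rules out $S \subseteq T$ and forces $M_{S,T} = 0$. A triangular matrix with unit diagonal is invertible over any field, hence over $\field_2$, so the rows of $M$, which are exactly the vectors $\ev(m_S)$, are $\field_2$-linearly independent and span $\field_2^N$.

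Finally, for any $\i \subseteq \M_n$, the restricted family $\{\ev(\alpha) : \alpha \in \i\}$ consists of $|\i|$ vectors drawn from a linearly independent set, hence is itself linearly independent; it spans $\C(\i)$ by definition of a monomial code, so $\dim{\C(\i)} = |\i|$. There is no real obstacle here; the only step to get right is the triangular structure of the zeta matrix of the Boolean lattice, which the ordering by cardinality handles cleanly. If one wanted an alternative route, the same conclusion follows from Möbius inversion on the Boolean lattice, which produces an explicit inverse for $M$ and therefore also implies injectivity of $\ev$.
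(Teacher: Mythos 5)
Your proof is correct and takes essentially the same route as the paper, which simply asserts that monomials are linearly independent in $\R_n$ and that $\ev$ is injective; you fill in the detail the paper leaves implicit by exhibiting the evaluation matrix as the zeta matrix of the Boolean lattice and observing its unitriangularity under a cardinality-compatible ordering.
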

\begin{proof}
  Monomials are linearly independent over $\R_n$ and the map $\ev$, being injective, preserves this relation.
\end{proof}

There is a natural partial order on $\M_n$ called the divisibility order defined as follows.
For two monomials $\alpha,\beta \in \M_n$ with exponents $\{a_i\}_i$ and $\{b_i\}_i$, we write $\alpha \cleq_{w} \beta$ if and only if for all $i \in \{0,...,n-1\}$,
\begin{align}
  a_i \leq b_i~.
\end{align}
The subscript $w$ indicates a weak order and we can define a strong order as follows.
\begin{definition}[\textbf{Strong order}]
Let $\alpha,\beta \in \M_n$ be two monomials of the same degree $r$ such that $f=x_{i_1}\cdots x_{i_r}$ and $g = x_{j_{1}}\cdots x_{j_r}$.
We say that $f \cleq g$ if and only if for all $1 \leq t \leq r$:
\begin{align}
  i_t \leq j_t~.
\end{align}
We can then extend this definition to include monomials $\alpha$ and $\beta$ of different degrees via divisibility.
If $\deg(\alpha) < \deg(\beta)$, then $\alpha \cleq \beta$ if and only if there exists a monomial $\beta^{*}$ such that $\deg(\alpha) = \deg(\beta^{*})$ and $\alpha \cleq \beta^{*} \cleq_{w} \beta$.
\end{definition}

This ordering can apply to an entire set of monomials as follows.
\begin{definition}[\textbf{Decreasing sets}]
\label{def:decreasingSet}
    A set $\i \subset \M_n$ is {\em decreasing} if and only if ($ g \in \i$ and $f \cleq g$) implies that $f \in \i$.

    Furthermore, a code $\C(\i)$ is a decreasing monomial code if $\i$ is a decreasing set.
\end{definition}

This ordering is important because of the following relation.
Given the exponent $(a_0,...,a_{n-1})$ of a monomial $\alpha$, let $a = \sum_{t = 0}^{n-1} \a_t 2^t$ be the integer corresponding to a natural ordering of the exponent.
\begin{lemma}
  \label{lem:ordering}
  Let $\alpha,\beta \in \M_n$ and $a$, $b$ be the integers corresponding to the exponents of $\alpha$, $\beta$ respectively.
  We have
  \[
    \alpha \cleq \beta \implies \B(\sfw^{(a)}) \leq \B(\sfw^{(b)})
  \]
\end{lemma}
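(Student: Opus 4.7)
My plan is to prove the inequality by reducing the strong order to a short sequence of elementary comparisons and then analyzing each using Arikan's recursive construction of the synthetic channels. First, I would show that if $\alpha \cleq \beta$, there exists a chain
\[
\alpha = \gamma_{0} \cleq \gamma_{1} \cleq \cdots \cleq \gamma_{m} = \beta
\]
of monomials in which each consecutive pair $\gamma_{k}, \gamma_{k+1}$ differs by one of two elementary operations: either (i) a divisibility step, where $\gamma_{k+1} = \gamma_{k} \cdot x_{j}$ for a variable $x_{j}$ not already appearing in $\gamma_{k}$, or (ii) an index-shift step, where $\gamma_{k+1}$ is obtained from $\gamma_{k}$ by replacing a single $x_{i}$ by $x_{i+1}$ with $x_{i+1} \notin \gamma_{k}$. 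In both cases the two associated integers differ by a single power of two. By transitivity it then suffices to verify the Bhattacharyya inequality for each elementary step separately.

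The divisibility step is the easier of the two. Multiplying $\gamma_{k}$ by $x_{j}$ toggles the $j$-th bit of the corresponding integer from $0$ to $1$, which, via Arikan's recursion, corresponds to replacing a single ``minus'' transformation at level $j$ by a ``plus'' transformation, with every other level of the construction unchanged. Since $\B(\sfw^{+}) \leq \B(\sfw) \leq \B(\sfw^{-})$ holds for every B-DMC $\sfw$, this one-level monotonicity immediately yields the required comparison of the two synthetic Bhattacharyya parameters.

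The index-shift step is where the real work lies, and I expect it to be the main obstacle. Here the two synthetic channels differ only in the order in which a ``$+$'' and a ``$-$'' are applied at two adjacent levels of the recursion, so the problem reduces to comparing $\B\bigl((\sfv^{+})^{-}\bigr)$ with $\B\bigl((\sfv^{-})^{+}\bigr)$ for some intermediate channel $\sfv$ obtained from $\sfw$ by the transformations at the remaining levels. Applying the Arikan bounds
\[
\B(\sfv^{+}) = \B(\sfv)^{2}, \qquad \B(\sfv^{-}) \leq 2\B(\sfv) - \B(\sfv)^{2}
\]
twice yields explicit expressions (or tight bounds) for both quantities as polynomials in $\B(\sfv)$, and the desired inequality reduces to an elementary algebraic comparison; on the binary erasure channel, where both bounds are tight, it is equivalent to $(2 - x)^{2} \geq 2 - x^{2}$ for $x \in [0,1]$, which is immediate from $2(1-x)^{2} \geq 0$. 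Chaining the two elementary inequalities along the reduction produced in the first step then completes the proof.
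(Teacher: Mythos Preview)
The paper itself does not prove this lemma; it is stated and attributed to Bardet et al., so there is no in-paper argument to compare against. Your overall plan---factoring $\alpha \cleq \beta$ into a chain of elementary divisibility and adjacent index-shift moves and verifying each---is precisely the decomposition Bardet et al.\ (and, independently, Sch\"urch) use, and for the binary erasure channel your Bhattacharyya calculation goes through cleanly because every Arikan recursion is an equality there.

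For a general B-DMC, however, the index-shift step as written has a real gap. The two bounds you invoke, $\B(\sfv^{+}) = \B(\sfv)^{2}$ and $\B(\sfv^{-}) \le 2\B(\sfv) - \B(\sfv)^{2}$, yield only \emph{upper} bounds on both $\B\bigl((\sfv^{+})^{-}\bigr)$ and $\B\bigl((\sfv^{-})^{+}\bigr)$, and two upper bounds cannot be compared; to place $\B\bigl((\sfv^{-})^{+}\bigr) = \B(\sfv^{-})^{2}$ above $2\B(\sfv)^{2}-\B(\sfv)^{4}$ you would need a lower bound on $\B(\sfv^{-})$ at least as strong as $\B(\sfv)\sqrt{2-\B(\sfv)^{2}}$, which you have not supplied. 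There is a second, more structural obstruction: when the swapped positions $i,i{+}1$ are not the last levels of the recursion, the two intermediate channels must still be fed through the remaining $+/-$ transforms, and a bare inequality $\B(A)\le\B(B)$ is \emph{not} preserved by the ``$-$'' transform (e.g.\ a BEC and a BSC with equal Bhattacharyya parameter acquire different values after one ``$-$'' step, since the upper bound is tight only for the BEC). Bardet et al.\ avoid both issues by working with stochastic degradation rather than with $\B$: they establish that each elementary move induces a degradation relation between the two synthetic channels, degradation is preserved by subsequent polarisation steps (this is Lemma~\ref{lem:orderDegrade}(2) in the present paper), and the Bhattacharyya comparison then drops out at the end via Lemma~\ref{lem:orderDegrade}(1). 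Your argument is repaired by proving the degradation $(\sfv^{-})^{+} \cleq (\sfv^{+})^{-}$ in place of the scalar inequality and propagating that relation through the remaining levels.
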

This in turn implies the central result of Bardet et al. \cite{bardet2016algebraic} is the following theorem that we state without proof (See theorem 1 of \cite{bardet2016algebraic}).
\begin{theorem}
  Polar codes are decreasing monomial codes.
\end{theorem}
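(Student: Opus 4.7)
The plan is to identify a polar code with a monomial code via the canonical bijection between indices and monomials, and then derive the decreasing property from Lemma~\ref{lem:ordering}. Concretely, for each index $a \in \{0, \ldots, N-1\}$, read the binary expansion $a = \sum_t a_t 2^t$ as an exponent vector and associate the monomial $\alpha_a := x_0^{a_0} \cdots x_{n-1}^{a_{n-1}} \in \M_n$. Given a polar code $\C(N,\epsilon)$ with information set $\A \subseteq \{0,\dots,N-1\}$, define $\i := \{\alpha_a : a \in \A\}$. I would then prove two things: (i) $\C(N,\epsilon) = \C(\i)$, and (ii) $\i$ is a decreasing set in the sense of Definition~\ref{def:decreasingSet}.

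For (i), I would expand the encoding map $x = F^{\otimes n} u$ using the tensor structure of $F$. A short induction on $n$ shows that, under the canonical identification of $\field_2^n$ with $\{0, \ldots, N-1\}$, the $a$-th row of $F^{\otimes n}$ coincides with the evaluation vector $\ev(\alpha_a)$; the base case is just $\ev(1) = (1,1)$ and $\ev(x_0) = (0,1)$, and the inductive step is transparent from $F^{\otimes n} = F \otimes F^{\otimes (n-1)}$. Consequently the codewords $\{F^{\otimes n} u : \operatorname{supp}(u) \subseteq \A\}$ are exactly the $\field_2$-span of $\{\ev(\alpha_a) : a \in \A\}$, which by definition equals $\C(\i)$. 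For (ii), the decreasing property is essentially immediate from Lemma~\ref{lem:ordering}: if $\beta \in \i$ corresponds to $b \in \A$ (so $\B(\sfw^{(b)}) \leq \epsilon$) and $\alpha \cleq \beta$ with corresponding integer $a$, then Lemma~\ref{lem:ordering} yields $\B(\sfw^{(a)}) \leq \B(\sfw^{(b)}) \leq \epsilon$, hence $a \in \A$ and $\alpha \in \i$.

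The main obstacle is step (i): the alignment between the row indexing of $F^{\otimes n}$ and the evaluation map $\ev$ depends on a convention for how one orders $\field_2^n$ (natural versus bit-reversed). The bookkeeping has to be set up so that the integer $a$ appearing in the polar code construction is exactly the integer $a$ appearing in the hypothesis of Lemma~\ref{lem:ordering}; otherwise the ordering transport in step (ii) silently breaks. Once the bijection is pinned down — which amounts to fixing the convention in the induction of step (i) — step (ii) then proceeds mechanically, and the theorem follows.
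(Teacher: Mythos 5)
The paper cites Bardet et al.\ for this theorem and gives no proof of its own, so you are reconstructing a proof from scratch. Your overall architecture --- identify the encoder matrix with monomial evaluations, then use Lemma~\ref{lem:ordering} to conclude that the selected set of monomials is decreasing --- is the right shape, and you are right to single out the index/monomial bookkeeping as the delicate point. But as written, step (i) is false, and the failure is precisely the one you worried about.

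The concrete error is the sentence ``the codewords $\{F^{\otimes n}u : \operatorname{supp}(u)\subseteq\A\}$ are exactly the $\field_2$-span of $\{\ev(\alpha_a) : a\in\A\}$.'' The map $u\mapsto F^{\otimes n}u$ takes linear combinations of the \emph{columns} of $F^{\otimes n}$, not the rows, and $F^{\otimes n}$ is not symmetric, so the selected column span and the selected row span are genuinely different codes. Your row computation is correct (row $a$ of $F^{\otimes n}$ is $\ev(\alpha_a)$ in the natural order), but that is not what the code is spanned by. A small example makes this unambiguous: for $n=1$, $F=\bigl(\begin{smallmatrix}1&1\\0&1\end{smallmatrix}\bigr)$ and $\A=\{1\}$ (the single good channel), the code is the column span $\operatorname{span}\{(1,1)\}$, i.e.\ $\C(\{1\})$, whereas $\operatorname{span}\{\ev(\alpha_1)\}=\operatorname{span}\{(0,1)\}=\C(\{x_0\})$. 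Note also that the set $\{\alpha_a : a\in\A\}=\{x_0\}$ is \emph{not} decreasing since $1\cleq x_0$ but $1\notin\{x_0\}$, so the conclusion of your step (ii) would also be false for the set your step (i) produces.

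The repair requires a complementation. Since $(F^{\otimes n})^T = P\,F^{\otimes n}\,P$ where $P$ is the coordinate permutation $u\mapsto\bar{u}$, column $a$ of $F^{\otimes n}$ coincides, up to that permutation of positions (a code equivalence), with $\ev(\comp{\alpha_a})$. Thus the polar code is equivalent to $\C(\i)$ with $\i=\{\comp{\alpha_a}:a\in\A\}$, not $\{\alpha_a:a\in\A\}$. Once you insert the complement, step (ii) changes too: showing $\i$ is decreasing amounts to showing that $\alpha_b\cleq\alpha_a$ and $b\in\A$ force $a\in\A$, which requires $\alpha\cleq\beta\implies\B(\sfw^{(a)})\geq\B(\sfw^{(b)})$ --- the reverse of the inequality as printed in Lemma~\ref{lem:ordering}. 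That reversed form is what the polarization picture actually says (index $0$, the constant monomial, corresponds to the \emph{worst} synthetic channel, e.g.\ for $n=2$ on a BEC one has $\B(\sfw^{(0)})=1-(1-p)^4$ versus $\B(\sfw^{(3)})=p^4$). So you have two errors --- omitting the complement in the bijection, and taking the ordering lemma at face value with the wrong sign --- which happen to cancel and land you on the correct theorem statement, but the intermediate identification in step (i) does not hold and the argument as a whole is not sound. Fixing step (i) to use columns and $\comp{\alpha_a}$, and using the correct direction of the monotonicity, yields a valid proof along exactly the lines you sketched.
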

Recall that a polar code is constructed by ordering the Bhattacharyya parameters associated with the channels $\sfw^{(a)}$ for $a \in \{1,...,N\}$.
If we choose to include codewords $\ev(\beta)$ for $\beta \in \M_n$ such that $\B(\sfw^{(b)}) \leq \epsilon$, then we must necessarily have all $\alpha \cleq \beta$ because of lemma \ref{lem:ordering}.
This result is important because it allows for a simple description of the dual of a monomial code.

We denote the complement of a monomial $\alpha \in \M_{n}$ by $\comp{\alpha}$ as
\begin{align}
  \comp{\alpha}(x) = x_{0}^{a_{0} \oplus 1} ... x_{n-1}^{a_{n-1} \oplus 1}~,
\end{align}
where $\oplus$ denotes XOR.
By extension, for any subset $\i \subseteq \M_n$, the set $\comp{\i} = \{\comp{\alpha}|\alpha \in \i\}$.
Equivalently, the complement $\comp{\alpha}$ of $\alpha$ is the smallest monomial such that $\alpha\comp{\alpha} = x_0...x_{n-1}$, where $x_0...x_{n-1}$ is the complete monomial with all the $n$ variables.

We denote the product of two monomials $\alpha,\beta \in \M_n$ as $\alpha \cdot \beta$ where $(\alpha\cdot\beta)(u)= \alpha(u)\beta(u)$ for $u \in \field_2^{n}$.
For two vectors $v,w \in \field_{2}^{N}$, we denote the element-wise product as $v*w = (v_1 w_1,...,v_{N}w_{N})$.

The evaluation map $\ev$ maps the element-wise product to the product between monomials, i.e.
\begin{align}
  \ev(\alpha)*\ev(\beta) = \ev(\alpha \cdot \beta)~.
\end{align}

The following lemma is proved in \cite{bardet2016algebraic} (See Proposition 6).
\begin{lemma}
  \label{lem:dualDecreasingMonomial}
  Let $\i \in \M_n$ be a decreasing set of monomials and $\C(\i)$ be the corresponding monomial code.
  The dual of $\C(\i)$ is
  \begin{align}
    \C(\i)^{\perp} = \C(\M_n \setminus \comp{\i}) ~,
  \end{align}
  which is also a decreasing monomial code.
\end{lemma}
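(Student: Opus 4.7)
The plan is to prove the two codes coincide by matching their dimensions and then establishing one containment using the decreasing property of $\i$. The main computational ingredient is understanding the inner product of two monomial evaluations, and the main structural ingredient is that complementation reverses divisibility.

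First I would establish a pointwise formula for the inner product of evaluations. Using $\ev(\alpha)*\ev(\beta) = \ev(\alpha\cdot\beta)$, the standard $\field_2$ inner product $\ip{\ev(\alpha)}{\ev(\beta)}$ equals $\sum_{u \in \field_2^n}(\alpha\cdot\beta)(u) \pmod{2}$. For a single monomial $m \in \M_n$ of degree $r$, exactly $2^{n-r}$ evaluation points satisfy $m(u)=1$, so the sum is $1$ iff $r=n$, i.e.\ iff $m$ is the complete monomial $x_0 x_1 \cdots x_{n-1}$. Since $\alpha \cdot \comp{\alpha}$ is by definition the complete monomial, and $\alpha \cdot \beta$ is the monomial whose variable set is $\mathrm{vars}(\alpha)\cup\mathrm{vars}(\beta)$, I conclude
\[
  \ip{\ev(\alpha)}{\ev(\beta)}=1 \iff \alpha \cdot \beta = x_0 \cdots x_{n-1} \iff \comp{\alpha} \cleq_w \beta.
\]

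Next I would count dimensions. By the earlier lemma, $\dim \C(\i)=|\i|$, and since $\alpha \mapsto \comp{\alpha}$ is an involution on $\M_n$, $|\comp{\i}|=|\i|$. Hence $\dim \C(\M_n\setminus\comp{\i}) = N-|\i| = \dim \C(\i)^{\perp}$, so it suffices to prove $\C(\M_n\setminus\comp{\i}) \subseteq \C(\i)^{\perp}$.

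For the inclusion, I pick $\beta \in \M_n\setminus\comp{\i}$ and $\alpha \in \i$, and must show $\ip{\ev(\alpha)}{\ev(\beta)}=0$. By the contrapositive, suppose $\comp{\alpha}\cleq_w \beta$. Complementing reverses divisibility: $a_i \leq b_i$ for all $i$ iff $1\oplus b_i \leq 1\oplus a_i$ for all $i$, so $\comp{\beta}\cleq_w \alpha$. I would then observe that the weak order is in fact refined by the strong order: if $\deg(\comp{\beta})=\deg(\alpha)$ then weak divisibility forces equality, and otherwise $\beta^\ast := \comp{\beta}$ itself witnesses $\comp{\beta}\cleq \alpha$ in the sense of the strong-order definition. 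Since $\i$ is decreasing and $\alpha \in \i$, this forces $\comp{\beta}\in\i$, i.e.\ $\beta \in \comp{\i}$, contradicting $\beta \in \M_n\setminus\comp{\i}$. Hence $\ip{\ev(\alpha)}{\ev(\beta)}=0$, giving the desired inclusion and, by dimension, equality.

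Finally, to finish the statement I should verify that $\M_n\setminus\comp{\i}$ is itself decreasing. If $f \cleq g$ and $g \in \M_n\setminus\comp{\i}$, then the same reversal lemma gives $\comp{g}\cleq \comp{f}$; if $f \in \comp{\i}$, decreasingness of $\i$ applied to $\comp{f}\in\i$ would give $\comp{g}\in\i$, so $g \in \comp{\i}$, contradicting the hypothesis. I expect the main obstacle to be the inner-product step, since it requires the parity count $\sum_u m(u)\pmod 2$ and the recognition that divisibility by $\comp{\alpha}$ (rather than equality with $\comp{\alpha}$) is the condition that makes the inner product nonzero; everything else then flows smoothly from the order-reversing behavior of complementation combined with the decreasing hypothesis.
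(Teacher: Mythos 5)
Your proof is correct and follows essentially the same route as the paper: reduce the inner-product computation to a parity count that singles out the complete monomial, translate this into a divisibility condition involving $\comp{\alpha}$, and combine the order-reversing behaviour of complementation with the decreasing hypothesis. You are in fact slightly more careful than the paper's own proof on two points: (i) the inner product $\ip{\ev(\alpha)}{\ev(\beta)}$ is nonzero iff $\comp{\alpha} \cleq_w \beta$ in the \emph{weak} order, whereas the paper writes $\comp{\alpha} \cleq \beta$ in the strong order, which is only the correct necessary-and-sufficient condition after invoking the (easy, but worth stating) fact that weak divisibility refines the strong order and feeds correctly into decreasingness; and (ii) you make the dimension count explicit to upgrade a single inclusion to an equality, a step the paper leaves tacit. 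These are worthwhile refinements but do not change the underlying argument.
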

\begin{proof}
  For $\alpha,\beta \in \M_n$, we may write the inner product $\ip{\ev(\alpha)}{\ev(\beta)}$ as
  \begin{align}
    \label{eq:sum}
    \ip{\ev(\alpha)}{\ev(\beta)} = \sum_{u \in \field_2^{n}} \alpha(u)\beta(u)~.
  \end{align}
  Note that by symmetry, the only monomial that is non-zero when summed over $\field_{2}^{n}$ is the complete monomial $x_0...x_{n-1}$.
  Therefore the sum in eq. (\ref{eq:sum}) is non-zero if and only if $\comp{\alpha} \cleq \beta$.
  Therefore if we want the inner product to be $0$ for all $\beta \in \i$, then we require $\alpha \in \comp{\i}$.
  This proves the claim that $\C(\i)^{\perp} = \C(\M_n \setminus \comp{\i})$.
  
  To prove that $\C(\M_{n} \setminus \comp{\i})$ is also a decreasing monomial code, we can show that $\M_{n} \setminus \comp{\i}$ is a decreasing set.
  Let $\beta \in \M_{n} \setminus \comp{\i}$ and $\alpha \in \M_{n}$ such that $\alpha \cleq \beta$.
  For the sake of contradiction, assume that $\alpha \notin \M_{n} \setminus \comp{\i}$.
  This in turn means that $\comp{\alpha} \in \i$.
  However, if $\alpha \cleq \beta$, it implies that $\comp{\beta} \cleq \comp{\alpha}$.
  Since $\i$ is a decreasing set \ref{def:decreasingSet}, this means that $\comp{\beta} \in \i$ or that $\beta \in \comp{\i}$, which is a contradiction.
  
  This establishes the result.
\end{proof}

\subsection{Tri-orthogonal quantum codes}
\label{subsec:triorthogonalcodes}
For the sake of completeness, we review the definition of a triply-even space and a tri-orthogonal code as in \cite{bravyi2012magic}, \cite{haah2017codes}.
These notions are key to understanding quantum error correcting codes which promote physical transversal $T$ gates to logical transversal $T$ gates.

We denote by $X$ and $Z$ the Pauli operators
\begin{align*}
  X =
  \begin{pmatrix}
    0 & 1\\
    1 & 0
  \end{pmatrix}~,\qquad
  Z = 
  \begin{pmatrix}
   1 & 0\\
   0 & -1 
  \end{pmatrix}~.
\end{align*}
Together with the phase $i\1$, these operators generate the Pauli group $\P = \langle i\1, X, Z \rangle$ and can be extended to $N$ qubits as $\P_N = \P^{\otimes N}$.
For $a,b \in \field_2^N$, we let $X(a) = \bigotimes_{i=1}^{N} X^{a_i}$ and $Z(b) = \bigotimes_{j=1}^{N} Z^{b_j}$.
The Pauli group forms the first level of the Clifford hierarchy, $\K^{(1)}$.
The Clifford group $\K^{(2)}$ is defined as the set of automorphisms of the Pauli group, i.e.
\begin{align*}
  \K^{(2)} = \{U| \forall P \in \P_n: UPU\conj \in \P_N \}~.
\end{align*}
This group can be generated by the phase-gate $S = \sqrt{Z}$, the Hadamard gate $H = (X+Z)/\sqrt{2}$ and the controlled-$Z$ gate $\cz = \kb{0} \otimes \1 + \kb{1} \otimes Z$.
By themselves, these gates are insufficient to generate a universal gate set.
This is rectified by using gates from the third level of the Clifford hierarchy $\K^{(3)}$,
\begin{align*}
  \K^{(3)} = \{U| \forall P \in \P_n: UPU\conj \in \K^{(2)} \}~.
\end{align*}
Any gate from the third level of the Clifford hierarchy, together with the Clifford group $\K^{(2)}$ is sufficient to generate a universal gate set.
One such gate is the $T$ gate, where $T = \operatorname{diag}(1,e^{i\pi/4})$.

CSS codes are a class of quantum error correcting codes for which all the stabilizer generators are either composed entirely of $X$ operators and identity $I$ or entirely of $Z$ operators and identity $I$ \cite{calderbank1996good,steane1996multiple}.
This can be defined using two codes $\C_X,\C_Z \subset \field_2^{N}$ such that $\C_Z^{\perp} \subset \C_X$.
We obtain a code by mapping the spaces $\C_X^{\perp}$ and $\C_Z^{\perp}$ to stabilizer generators of $X$ and $Z$ type respectively, i.e. for $a \in \C_X^{\perp}$ and $b \in \C_Z^{\perp}$, we define stabilizer generators $X(a)$ and $Z(b)$.
We seek CSS codes for which the $T$ gate applied transversally on the physical level is equivalent to a $T$ gate applied to the logical qubits encoded by the code.
To this end, we consider code spaces with the following properties.

\begin{definition}[\textbf{Triply even space}]
  A subspace $\V \subseteq \field^{n}$ is said to be \emph{triply-even} if for any triple $u,v,w \in \V$,
  \begin{align*}
    |u * v * w| = 0 \pmod{2}~.
  \end{align*}
\end{definition}

\begin{definition}[\textbf{Tri-orthogonal matrix}]
Let $\h \in \field_2^{m \times n}$ be a matrix whose rows are labelled $\{h^{(a)}\}_{a=1}^{m}$.
We say that $\h$ is a tri-orthogonal matrix if and only if
\begin{enumerate}
  \item for $1 \leq a < b \leq m$
        \begin{align*}
          |h^{(a)}*h^{(b)}| = 0\pmod{2}
        \end{align*}
  \item for $1 \leq a < b < c \leq m$,
        \begin{align*}
          |h^{(a)}*h^{(b)}*h^{(c)}| = 0\pmod{2}
        \end{align*}
\end{enumerate}
\end{definition}

We shall partition $\h$ into two matrices $\h_0$ and $\h_1$, where $\h_0$ contains the $(m-k)$ even weight rows of $\h$ and $\h_1$ contains the $k$ odd weight rows of $\h$ for some $k \in \natural$.
Let $\g$ be the matrix whose rows are orthogonal to $\h$.

To obtain a tri-orthogonal quantum code $\css(X,\h_0; Z,\g)$ from the tri-orthogonal matrix $\h$, we associate 
\begin{enumerate}
  \item the rows of $\h_0$ with the $X$ stabilizer generators;
  \item the rows of $\g$ with the $Z$ stabilizer generators;
  \item the rows of $\h_1$ to both the $X$ and $Z$ logical operators.
\end{enumerate}

As shown in the paper by Bravyi and Haah, such a code is useful because they promote transversal $T$ gates to logical $T$ gates.
Letting $T_n = T^{\otimes n}$, lemma 2 from \cite{bravyi2012magic} states that
\begin{lemma}
\label{lem:bravyihaah}
Suppose $\css(X,\h_0; Z,\g)$ is a tri-orthogonal code.
Then there exists a diagonal unitary operator $U$ in the Clifford group $\K^{(2)}$ composed only of $\cz$ and $S$ such that
  \begin{align}
    \encket{A^{\otimes k}} = UT_{n}\encket{+^{\otimes k}}~.
  \end{align}
\end{lemma}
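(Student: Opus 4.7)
The plan is to compute $T^{\otimes n}\encket{+^{\otimes k}}$ directly and read off the Clifford correction $U$. Writing the encoded plus state as the uniform superposition $\encket{+^{\otimes k}} = \tfrac{1}{\sqrt{2^m}}\sum_{v\in \rowspan{\h}}\ket{v}$, every vector $v = \sum_a \lambda_a h^{(a)}$ decomposes uniquely as $s + x\h_1$ with $s\in\rowspan{\h_0}$ and $x\in\field_2^k$. Applying $T^{\otimes n}$ multiplies $\ket{v}$ by $e^{i\pi|v|/4}$, so the proof reduces to analysing $|v|\bmod 8$ in terms of the $\lambda_a$.

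The key computation is the standard inclusion--exclusion for Hamming weights of $\field_2$-sums:
\[
  |v| = \sum_a \lambda_a|h^{(a)}| - 2\sum_{a<b}\lambda_a\lambda_b|h^{(a)}*h^{(b)}| + 4\sum_{a<b<c}\lambda_a\lambda_b\lambda_c|h^{(a)}*h^{(b)}*h^{(c)}| - \cdots.
\]
All terms of order $\geq 4$ in $\lambda$ carry a prefactor divisible by $8$ and so vanish modulo $8$. Each cubic term contributes $\pi \lambda_a\lambda_b\lambda_c|h^{(a)}*h^{(b)}*h^{(c)}|$ to $\tfrac{\pi|v|}{4}$, which is $0\pmod{2\pi}$ by the third tri-orthogonality condition. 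Each quadratic term contributes $-\tfrac{\pi}{2}\lambda_a\lambda_b|h^{(a)}*h^{(b)}|$, which is $\pi$ times an integer by the second tri-orthogonality condition. Splitting the linear sum along $\h = \h_0\sqcup\h_1$ and writing $|h^{(a)}| = 2m_a+\epsilon_a$ with $\epsilon_a = 1$ iff $a\in\h_1$, the $\epsilon_a$ piece gives exactly $\tfrac{\pi|x|}{4}$---the magic-state phase in $\encket{A^{\otimes k}} = 2^{-k/2}\sum_x e^{i\pi|x|/4}\encket{x}$---while everything else is $\tfrac{\pi}{2}$ times an integer linear in $\lambda$ plus $\pi$ times an integer quadratic in $\lambda$, i.e.\ precisely the phase shape produced by $S_j$ and $\cz_{jj'}$ gates.

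The last step is to realise this residual phase as a diagonal operator $U^\dagger$ on the physical qubits. Choosing a right-inverse $A\in\field_2^{n\times m}$ with $\h A = I_m$ gives $\lambda_a = \sum_j A_{ja} v_j$ on $\rowspan{\h}$, converting the linear-in-$\lambda$ residual into a linear form in $v$ modulo $4$ (implementable by powers of $S_j$, noting $Z_j = S_j^2$) and the quadratic-in-$\lambda$ residual into a $\field_2$-quadratic form in $v$ after collapsing $v_j^2=v_j$ (implementable by $\cz_{jj'}$ with single-qubit $S$-power corrections absorbing the diagonal squared terms). I expect the main obstacle to be this final bookkeeping: tracking powers of $S$ carefully so that the phase, prescribed a priori only on $\rowspan{\h}$, is faithfully reproduced by the chosen decomposition, and verifying that $U$ lies in the subgroup generated by $\cz$ and $S$.
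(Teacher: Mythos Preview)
The paper does not give its own proof of this lemma: it is quoted verbatim as ``lemma 2 from \cite{bravyi2012magic}'' and left unproved. There is therefore nothing in the paper to compare against.

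That said, your outline is precisely the Bravyi--Haah argument: expand $\encket{+^{\otimes k}}$ as a uniform superposition over $\rowspan{\h}$, compute the transversal-$T$ phase $e^{i\pi|v|/4}$ via the inclusion--exclusion identity for $|{\sum_a \lambda_a h^{(a)}}|$, kill the cubic and higher terms modulo $8$ using tri-orthogonality, and recognise the leftover linear-plus-quadratic phase as a product of $S$ and $\cz$ gates. Your identification of the only delicate point---that the residual phase is specified as a function of the coordinates $\lambda$ rather than of the physical string $v$, and one must lift it to a diagonal operator on the full $n$-qubit space---is accurate, and the right-inverse trick $\lambda = Av$ on $\rowspan{\h}$ is a clean way to do it. Just be careful that after substituting $\lambda_a = \sum_j A_{ja} v_j$ the quadratic part is reduced over $\field_2$ (using $v_j^2 = v_j$) so that every cross term is a genuine $\cz$ and every diagonal term is absorbed into the $S$-powers; once that bookkeeping is done the resulting $U$ is manifestly in $\langle S,\cz\rangle$ and agrees with the required phase on all of $\rowspan{\h}$, which is all that is needed.
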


\section{Distillation procedure}
\label{sec:distillationprocedure}

In this section, we describe the procedure we shall use to distill magic states.
Let $\tC = [N,K]$ be a binary linear code defined by a parity check matrix $\th \in \field_2^{(N-K)\times N}$.
If the dual code $\tC^{\perp}$ is triply-even, we may construct a quantum code by puncturing the columns of $\th$.
For simplicity, suppose we puncture the first $k$ bits of the space; if not, we can always permute the bits.
We may then use Gaussian elimination to express the parity check matrix $\th$ of the code $\tC$ in systematic form with respect to the puncture:
\begin{align}
  \th =
  \begin{pmatrix}
    \1_k & \h_{1}\\
    0  & \h_{0}
  \end{pmatrix} ~,
\end{align}
where $\1_k$ represents the $k \times k$ identity matrix, and $\h_{1}$, $\h_{0}$ are matrices of dimension $k \times (N-k)$ and $(K-k) \times (N-k)$ respectively.
The matrices $\h_1$ and $\h_0$ have rows whose weights are $1$ and $0$ mod $2$ respectively.
Since $\tC^{\perp}$ was a triply-even space by assumption, we obtain a tri-orthogonal matrix $\h$ as
\begin{align}
  \h = 
  \begin{pmatrix}
    \h_{1}\\
    \h_{0}
  \end{pmatrix}~.
\end{align}
The resulting quantum code has block size $N-k$ and dimension $k$.
Let $\C_P$ denote the punctured code obtained from $\C$ by removing the first $k$ bits.

Let $\T$ represent the map $\T(\rho) = T\rho T\conj$, where the $T$ gate is defined in eq. (\ref{eq:defT}).
For $p\in [0,1]$, let $\E_{p}$ represent a single-qubit noise channel of strength $p$.
In particular, we will be interested in channels whose action on a density operator $\rho$ describing a single qubit is described by 
\begin{enumerate}
\item the erasure channel $\E_{p}(\rho) = (1-p)\rho + p\kb{\perp}$, where $\perp$ is a symbol denoting erasure, or
\item the dephasing channel $\E_{p}(\rho) = (1-p)\rho + pZ\rho Z$
\footnote{As noted by Bravyi and Haah, if the state is not already in this form, we can force it to be by applying the gates $I$ and $e^{-i\pi/4}SX$ with probability $1/2$ each.}.
\end{enumerate}
We model the noisy $T$ as the composition of ideal gate and the noise channel, i.e. $\E_p \circ \T$.

Distillation proceeds as follows:
\begin{enumerate}
	\item We begin with the state $\eta_0 = \encket{+^{\otimes k}}\encbra{+^{\otimes k}}$, the encoded $k$-fold tensor product of the $\ket{+}$ state over $(N-k)$ physical qubits.
	\item We apply an $(N-k)$-fold tensor product of (noisy) transversal $T$ gates on $\eta_0$ to obtain $\eta_1$
  \[
    \eta_1 = (\E_p\circ\T)^{\otimes (N-k)}(\eta_0)~.
  \]
  \item We apply the Clifford gate guaranteed by lemma \ref{lem:bravyihaah} to $\eta_1$ to obtain $\eta_2$
  \begin{align*}
    \eta_2 = U \eta_1 U\conj &= U\E_p\left(T_{N-k} \encket{+^{\otimes k}}\encbra{+^{\otimes k}} T_{N-k}\conj \right)U\conj\\
    &= \E_p\left(UT_{N-k} \encket{+^{\otimes k}}\encbra{+^{\otimes k}} T_{N-k}\conj U\conj\right)\\
    &= \E_p\left(\encket{A^{\otimes k}}\encbra{A^{\otimes k}}\right)~.
  \end{align*}
  \item \label{item:diverge} At this stage, we utilize Steane's error correction technique \cite{steane1997active}.
  Upon measurement, we obtain a codeword of $\C_P$, and assume that the decoder has been provided the location of the punctures as side information.
  These locations together with the error arising from the noisy channel $\E_{p}$ can be regarded as a composite error channel on the code $\C$.
  The decoder can then proceed to run the decoder of $\C$ on the received codeword where it treats the punctured positions as having suffered erasure noise.
  \item Having deduced the error $E$, we can now perform correction and run the encoder in reverse to obtain $k$ copies of the magic state $\ket{A}$.
\end{enumerate}

We note that our scheme uses Steane error correction at step \ref{item:diverge} rather than measuring the $X$-stabilizers and post-selecting on the all $+1$-outcome.
The polar code is designed to optimize its performance under SC decoding and therefore characterizing the code by its distance can be misleading.
As mentioned in the introduction, the polar code outperforms the Reed-Muller code as measured by its error correction capacity even though both code families are constructed using the same encoding circuits.
Therefore a better metric to study the performance of the polar code is to use its bit error probability under SC decoding.
To compare the two schemes, note that the probability of successfully post-selecting a state scales as $O\left((1-p)^{d}\right)$, and therefore decreases as a function of the block-size (and the number of encoded qubits); the corresponding error rate falls as $O\left(p^{d}\right)$.
On the other hand, decoding guarantees that every state obtained after error correction is part of the codespace whereas the error rate only falls as $O\left(\sqrt{p^{d}}\right)$.
The probability of error in step \ref{item:diverge} is thus upper-bounded by the probability of failure of the classical decoder associated with $\C$.
Understanding the consequences for these trade-offs numerically are directions for future research.

\section{Tri-orthogonal codes from polar codes}
\label{sec:triorthogonalcodesfrompolarcodes}
In this section, we demonstrate the existence of triply-even codes derived from polar codes.
We supplement this with numerics which upper bound the size of the triply-even space and the probability of error associated with the decoding process for block sizes of interest.

Note that according to the prescription above, if we wish to construct a tri-orthogonal code from a code $\C$, we first require its dual $\C^{\perp}$ to be triply-even.
The following series of simple lemmas establish constraints on a decreasing set $\i$ such that $\C(\i)^{\perp}$ is triply-even.

\begin{lemma}
  \label{lem:triplyEvenEquivalance}
  If $\V \subseteq \field_2^{N}$ is triply-even is equivalent to stating that for any two vectors $u,v \in \V$, $u * v \in \V^{\perp}$.
\end{lemma}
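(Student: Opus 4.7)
The plan is to prove the equivalence by unpacking both sides in terms of the standard bilinear form on $\field_2^N$. The key identity I would use is that for any three vectors $u,v,w \in \field_2^N$,
\begin{align*}
  \ip{u*v}{w} = \sum_{i=1}^{N} u_i v_i w_i = |u*v*w| \pmod{2},
\end{align*}
since the element-wise product $u*v*w$ has a $1$ in position $i$ exactly when $u_i = v_i = w_i = 1$, and the $\field_2$-inner product is simply the parity of that count. This single identity does essentially all the work.

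First I would handle the forward direction. Suppose $\V$ is triply-even. Fix arbitrary $u,v \in \V$; I want to show $u*v \in \V^{\perp}$. For any $w \in \V$, the identity above gives $\ip{u*v}{w} = |u*v*w| \equiv 0 \pmod{2}$ by the triply-even hypothesis. Hence $u*v$ is orthogonal to every element of $\V$, so $u*v \in \V^{\perp}$.

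Second I would handle the reverse direction. Suppose for every $u,v \in \V$ we have $u*v \in \V^{\perp}$. To show triple-evenness, pick any triple $u,v,w \in \V$. Since $u*v \in \V^{\perp}$ and $w \in \V$, the inner product $\ip{u*v}{w}$ vanishes in $\field_2$; applying the identity in the reverse direction yields $|u*v*w| \equiv 0 \pmod{2}$, as required.

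There is no real obstacle here: the statement is essentially a definitional reshuffle, and the only content is the observation that the $\field_2$-inner product of $u*v$ with $w$ equals the parity of $|u*v*w|$. Accordingly I would keep the proof quite short, essentially a two-line argument in each direction, and flag the identity at the start so that the two implications read as immediate consequences.
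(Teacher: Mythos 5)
Your proposal is correct and follows the same route as the paper's proof: both sides hinge on the identity $\ip{u*v}{w} = |u*v*w| \pmod 2$, with the forward direction stated explicitly and the reverse noted as immediate. You have simply spelled out the identity and the second implication in more detail than the paper, which compresses the same argument into two sentences.
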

\begin{proof}
  For $u,v,w \in \V$, $|u*v*w| = 0 \pmod{2}$ and therefore, any product of the form $u*v$ is orthogonal to vectors in $\V$.
  The other direction also follows trivially.
\end{proof}

\begin{lemma}
\label{lem:bijectionOfStarProduct}
If $\i, \j \subseteq \M_n$ are two sets of monomials, then
\begin{align*}
      \C(\i) * \C(\j) = \C(\i \cdot \j)~.
\end{align*}
\end{lemma}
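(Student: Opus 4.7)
The plan is to prove the two inclusions separately, with both relying on the bilinearity of the element-wise product $*$ together with the multiplicativity identity $\ev(\alpha)*\ev(\beta)=\ev(\alpha\cdot\beta)$ stated just before the lemma. First I would make the (implicit) convention explicit: $\C(\i)*\C(\j)$ denotes the linear span of all vectors of the form $u*v$ with $u\in\C(\i)$ and $v\in\C(\j)$, and $\i\cdot\j$ denotes the set $\{\alpha\cdot\beta:\alpha\in\i,\,\beta\in\j\}\subseteq\M_n$. This is needed because $u*v$ by itself is not generally closed under addition, so we must take the span on the left-hand side.

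For the inclusion $\C(\i\cdot\j)\subseteq\C(\i)*\C(\j)$, I would argue on generators. A spanning vector of $\C(\i\cdot\j)$ has the form $\ev(\alpha\cdot\beta)$ for some $\alpha\in\i$, $\beta\in\j$. By the multiplicativity of $\ev$, this equals $\ev(\alpha)*\ev(\beta)$, and since $\ev(\alpha)\in\C(\i)$ and $\ev(\beta)\in\C(\j)$, this is a generator of $\C(\i)*\C(\j)$. Taking spans establishes the inclusion.

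For the reverse inclusion, pick $u\in\C(\i)$ and $v\in\C(\j)$ and expand them as $u=\sum_{\alpha\in\i}c_\alpha\ev(\alpha)$ and $v=\sum_{\beta\in\j}d_\beta\ev(\beta)$ with $c_\alpha,d_\beta\in\field_2$. Using the fact that $*$ is $\field_2$-bilinear, we compute
\begin{align*}
  u*v = \sum_{\alpha\in\i,\,\beta\in\j} c_\alpha d_\beta\,\ev(\alpha)*\ev(\beta) = \sum_{\alpha\in\i,\,\beta\in\j} c_\alpha d_\beta\,\ev(\alpha\cdot\beta),
\end{align*}
which is a linear combination of evaluations of monomials in $\i\cdot\j$, hence lies in $\C(\i\cdot\j)$. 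Since $\C(\i\cdot\j)$ is a subspace, the span of all such products $u*v$ is contained in $\C(\i\cdot\j)$, giving the other inclusion.

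There is no real obstacle here; the statement is essentially a restatement of the ring isomorphism between $\R_n$ (under its natural product) and $\field_2^N$ (under the pointwise product) induced by $\ev$. The only subtle point is the convention for $\C(\i)*\C(\j)$, which must be the linear span rather than the naive set of products, since otherwise the claimed equality would fail on sums of pointwise products that need not themselves be single pointwise products.
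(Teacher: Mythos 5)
Your proof is correct and follows essentially the same route as the paper's: expand elements of $\C(\i)$ and $\C(\j)$ in the monomial basis, use $\field_2$-bilinearity of $*$, and apply the multiplicativity identity $\ev(\alpha)*\ev(\beta)=\ev(\alpha\cdot\beta)$. Your version is a bit more careful in making the span convention for $\C(\i)*\C(\j)$ explicit and in separating the two inclusions, but the underlying argument is the same as the paper's.
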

\begin{proof}
  Let $\i = \{\alpha_1,...,\alpha_{|\i|}\}$ and $\j = \{\beta_1,...,\beta_{|\j|}\}$ denote the elements of these sets.
  Any vectors $v \in \C(\i)$ and $w \in \C(\j)$ can be expressed as $v = \sum_{i=1}^{|\i|} s_i \ev(\alpha_i)$ and $w = \sum_{j=1}^{|\j|} t_j \ev(\beta_j)$ for some constants $s_i, t_j \in \field_2$ for $i \in \i, j \in \j$.
  Since $\ev$ is a bijection and maps the star product of two vectors to the product of monomials, any element of the star product of the two is of the form
  \begin{align*}
    \sum_{i,j} r_{ij} \ev(\alpha_i)*\ev(\beta_j) = \sum_{i,j} r_{ij} \ev(\alpha_i \cdot \beta_j)~,
  \end{align*}
  for some constants $r_{ij} \in \field_2$ for $i \in \i, j \in \j$.
  The desired result follows.
\end{proof}

\begin{lemma}
  Let $\i \subset \M_n$ be decreasing and $\C(\i)$ be the corresponding decreasing monomial code of dimension $K$.
  The space $\C(\i)^{\perp} = \C(\M_n \setminus \comp{\i})$ is tri-orthogonal if and only if
  \[
    (\M_n \setminus \comp{\i}) \cdot (\M_n \setminus \comp{\i}) \subseteq \i~.
  \]
\end{lemma}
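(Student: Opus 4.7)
The plan is to string together the two preceding lemmas with a final observation about linear independence of monomial evaluations. Throughout, I use that $\i$ is decreasing, which allows me to invoke Lemma \ref{lem:dualDecreasingMonomial} to identify $\C(\i)^{\perp} = \C(\M_n \setminus \comp{\i})$ and, taking duals again, $(\C(\i)^{\perp})^{\perp} = \C(\i)$.

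First I would apply Lemma \ref{lem:triplyEvenEquivalance} with $\V := \C(\i)^{\perp}$: the space $\V$ is triply-even (which is what the statement calls tri-orthogonal for a subspace) if and only if $u * v \in \V^{\perp} = \C(\i)$ for every $u, v \in \V$, equivalently
$$\C(\M_n \setminus \comp{\i}) * \C(\M_n \setminus \comp{\i}) \subseteq \C(\i).$$
Then Lemma \ref{lem:bijectionOfStarProduct} rewrites the left-hand side as $\C\bigl((\M_n \setminus \comp{\i}) \cdot (\M_n \setminus \comp{\i})\bigr)$, so the condition becomes an inclusion $\C(\S) \subseteq \C(\i)$ of two monomial codes, where $\S := (\M_n \setminus \comp{\i}) \cdot (\M_n \setminus \comp{\i})$ is again a set of monomials (products of monomials in $\R_n$, reduced modulo $x_i^2 = x_i$, are monomials).

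The remaining step is to turn this subspace inclusion into the set inclusion $\S \subseteq \i$. Here I would invoke the fact, already used in the proof of the dimension lemma in Section \ref{subsec:decreasingmonomialcodes}, that the $N = 2^n$ vectors $\{\ev(\alpha) : \alpha \in \M_n\}$ are linearly independent, so they form a basis of $\field_2^N$. Consequently, for any monomial $\gamma$ and any monomial set $\T \subseteq \M_n$, the containment $\ev(\gamma) \in \C(\T)$ is equivalent to $\gamma \in \T$. Applying this element-wise to $\S$ gives $\C(\S) \subseteq \C(\i) \iff \S \subseteq \i$, which is exactly the condition in the statement. I do not foresee any real obstacle; the argument is a short chain of equivalences, and the only mildly subtle point is the last equivalence, where one must remember that a monomial code is determined by its index set because the monomial evaluations form a basis.
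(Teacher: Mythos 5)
Your argument is correct and follows the same route as the paper: invoke Lemma \ref{lem:triplyEvenEquivalance} to restate the triply-even condition as a star-product inclusion, then Lemma \ref{lem:bijectionOfStarProduct} to convert it into a statement about monomial sets. The only difference is that you make explicit a step the paper leaves silent: passing from $\C(\S) \subseteq \C(\i)$ to $\S \subseteq \i$ genuinely requires observing that the $N$ monomial evaluations $\{\ev(\alpha):\alpha\in\M_n\}$ are linearly independent, so a monomial code determines its index set; that observation is correct and tightens the argument.
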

\begin{proof}
  From lemma \ref{lem:triplyEvenEquivalance}, the requirement that the code $\C^{\perp} = \C(\M_n \setminus \comp{\i})$ be triply-even is equivalent to
  \begin{align}
    \C(\M_n \setminus \comp{\i}) * \C(\M_n \setminus \comp{\i}) \subseteq \C(\i)~.
  \end{align}

  From lemma \ref{lem:bijectionOfStarProduct}, it follows that this is equivalent to
  \begin{align}
  \label{eq:monomialTriplyEvenRequirement}
    (\M_n \setminus \comp{\i}) \cdot (\M_n \setminus \comp{\i}) \subseteq \i~.
  \end{align}
\end{proof}

These results imply the following corollary which is straightforward to check numerically and doing so gives us the size of the dual code $\C^{\perp}$.
\begin{corollary}
\label{cor:check}
  Let $\i \in \M_n$ be a decreasing set of monomials.
  Finding the smallest code $\C(\i)$ such that $\C^{\perp}$ is triply-even corresponds to finding the smallest set $\i$ for which the condition
  \begin{align}
  \label{eq:condition}
    (\M_n \setminus \comp{\i}) \cdot (\M_n \setminus \comp{\i}) \cap (\M_n \setminus \i) = \emptyset~
  \end{align}
  is still true.
\end{corollary}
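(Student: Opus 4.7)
The proof is essentially a set-theoretic translation of the preceding lemma combined with a monotonicity observation, so my plan is short.

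First I would invoke the previous lemma, which gives the characterization that $\C(\i)^{\perp} = \C(\M_n \setminus \comp{\i})$ is triply-even if and only if
\[
(\M_n \setminus \comp{\i}) \cdot (\M_n \setminus \comp{\i}) \subseteq \i~.
\]
The corollary is just a rewriting of this containment. Using the elementary identity that, for subsets $A, B \subseteq \M_n$, one has $A \subseteq B$ iff $A \cap (\M_n \setminus B) = \emptyset$, I set $A = (\M_n \setminus \comp{\i}) \cdot (\M_n \setminus \comp{\i})$ and $B = \i$ to obtain exactly the stated condition \eqref{eq:condition}. This establishes that the condition in the corollary is equivalent to triply-evenness of $\C(\i)^{\perp}$.

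Next I would justify the ``smallest'' claim by invoking the lemma that $\dim \C(\i) = |\i|$. Minimizing the dimension of $\C(\i)$ over decreasing sets $\i$ subject to the tri-orthogonality constraint therefore amounts to minimizing $|\i|$ subject to \eqref{eq:condition}. Note also the useful monotonicity: if $\i \subseteq \i'$ are both decreasing, then $\comp{\i} \subseteq \comp{\i'}$, hence $\M_n \setminus \comp{\i'} \subseteq \M_n \setminus \comp{\i}$, so the left-hand side of \eqref{eq:condition} can only shrink as $\i$ grows while the right-hand set $\i$ grows; thus once \eqref{eq:condition} holds for some $\i$, it holds for every decreasing superset. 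This confirms that there is a well-defined ``smallest'' decreasing $\i$ (or, more precisely, any minimal such $\i$ under inclusion) and that checking \eqref{eq:condition} for a candidate $\i$ certifies its validity.

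Finally, I would observe that both sides of \eqref{eq:condition} are subsets of $\M_n$, which has cardinality $N = 2^n$, so the condition can be tested by a direct enumeration over pairs of monomials in $\M_n \setminus \comp{\i}$ — this is the ``straightforward to check numerically'' remark justifying the corollary's utility for the subsequent numerical constructions. I do not anticipate any real obstacle here; the only subtlety is keeping the direction of the complementation straight (monomials absent from $\i$ versus complements of monomials in $\i$), and this is handled cleanly by the set-theoretic identity cited above.
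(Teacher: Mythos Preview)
Your proposal is correct and follows essentially the same approach as the paper: both reduce the corollary to the preceding lemma's containment $(\M_n \setminus \comp{\i}) \cdot (\M_n \setminus \comp{\i}) \subseteq \i$ and then rewrite that containment as the empty-intersection condition \eqref{eq:condition}. If anything, your write-up is more complete than the paper's own terse argument, since you spell out the set-theoretic identity explicitly, invoke $\dim{\C(\i)} = |\i|$ to justify ``smallest'', and add the monotonicity observation that \eqref{eq:condition} is preserved under enlarging $\i$.
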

\begin{proof}
  Assume that $f \in \M_n \setminus \comp{\i}$ but $f \cdot f \notin \i$.
  This implies that $f \cdot f \in \M_n \setminus \i$, resulting in the condition \ref{eq:condition} above.
  Therefore the smallest code $\C(\i)$ such that $\C^{\perp}$ is triply-even is the smallest code for which the condition
  \[
    (\M_n \setminus \comp{\i}) \cdot (\M_n \setminus \comp{\i}) \cap (\M_n \setminus \i) = \emptyset~
  \]
  is still true.
\end{proof}

We wish to design a polar code $\C$ that will be subject to $N$ copies of a noise channel $\sfw:\field_2 \to \Y$ such that its dual $\C^{\perp}$ is triply-even.
To do so, we first compute the Bhattacharyya parameters corresponding to this channel as explained in subsection \ref{subsec:polarcodes}.
We then order the indices $a \in \{0,1\}^{N}$ according to this parameter, or equivalently, we order the monomials that corresponds to each index $a$.
Given a rate $R \in [0,1]$, we construct the polar code $\C$ using the $RN$ best channels chosen according to their performance.
The threshold $\epsilon \in [0,1]$ which is an upper bound on the bit error rate is the maximum Bhattacharyya parameter for the synthetic channels in $\C$.

These synthetic channels are in one-to-one correspondence with monomials as described in subsection \ref{subsec:decreasingmonomialcodes} and so equivalently, this procedure yields a set of monomials $\i$ of size $RN$.
We then use corollary \ref{cor:check} to verify whether the set $\M_n \setminus \comp{\i}$ satisfies the desired product relation.
Our goal is to construct the smallest set $\i$ satisfying this corollary as this will result in the largest code $\C^{\perp}$, which in turn will maximize the rate of the resulting quantum code.

We remark that this method may not be optimal.
It may be entirely possible that the onerous synthetic channels that violate the triply-even condition are not the poorly performing channels.
We leave the task of optimizing this algorithm as an objective for future research.

\subsection{Dimension of triply-even polar codes}
\label{subsec:dim}
Recall that polar codes are channel specific -- they have to be designed for a particular noise channel.
We first discuss the erasure channel as it is the simplest case to study the polar code.
As explained above, we can estimate the dimension of the channel numerically.
Figure \ref{fig:erasure} shows how the rate of the dual code varies as a function of the log of the block size $N$ for erasure channels whose probability of erasure are $0.01$, $0.02$ and $0.05$ respectively.
\begin{figure}[h]
\centering
\begin{tikzpicture}
  \node (image) at (0,0) {\includegraphics[scale=0.5]{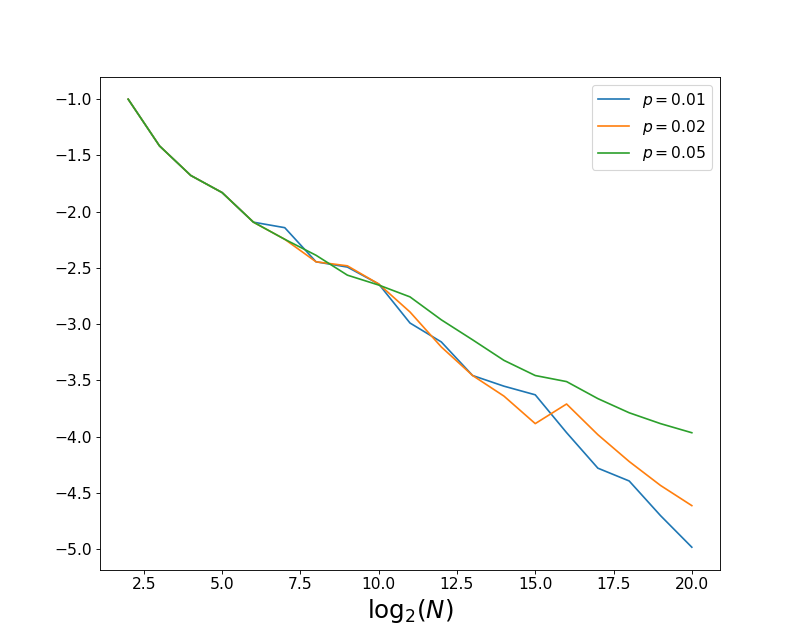}};
  \node [rotate=90] at (-6,0) {$\log_2(\dim{\C^{\perp}}/N)$};
  \fill [white] (-1,-5) rectangle (1.3,-4.3);
  \node at (0.3,-4.7) {$\log_2(N)$};
\end{tikzpicture}
  \caption{Log-log plot of rate $\dim{\C^{\perp}}/N$ of the dual code vs the block-size $N$ for erasure channels with erasure probabilities $0.01$, $0.02$ and $0.05$.
  }
  \label{fig:erasure}
\end{figure}
For an erasure rate of $p=0.01$, the log-log plot and the corresponding line of best fit indicate that the rate of the code $\C^{\perp}$ scales roughly as
\begin{align*}
  \dim{\C^{\perp}}/N = O(N^{-0.2})~,
\end{align*}
or equivalently, that the dimension of the code scales as
\begin{align*}
  \dim{\C^{\perp}} = O(N^{0.8})~.
\end{align*}
We also add that the rates of the codes for larger values of $p$ are better but as we shall soon see, this comes at the cost of error tolerance.

We now proceed to study the binary symmetric channel.
The complexity of designing the polar code as per Arikan's formulation using the Bhattacharyya parameters grows exponentially in the block size.
We instead use Monte Carlo sampling techniques to estimate the Bhattacharyya parameters, the drawback of which is that we require a very large number of samples before we see convergence.
For this reason, the range of block sizes we have explored is not as extensive as that of the erasure channel, and stops at $2^{16} = 65536$ qubits.
Fig. \ref{fig:bsc_dim} is a log-log plot of the rate $\dim{\C^{\perp}}/N$ of the dual code versus the block-size $N$.
\begin{figure}[h]
  \centering
\begin{tikzpicture}
  \node (image) at (0,0) {\includegraphics[scale=0.5]{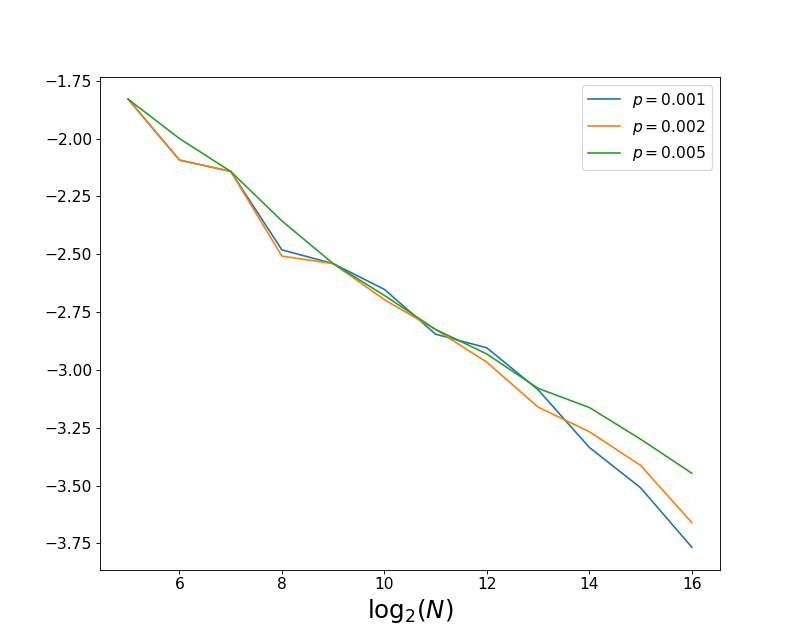}};
  \node [rotate=90] at (-6,0) {$\log_2(\dim{\C^{\perp}}/N)$};
  \fill [white] (-1,-5) rectangle (1.3,-4.3);
  \node at (0.3,-4.7) {$\log_2(N)$};
\end{tikzpicture}
  \caption{Log-log plot of rate $\dim{\C^{\perp}}/N$ of the dual code vs the block-size $N$ for binary symmetric channels at transition probabilities $10^{-3}$, $2\times10^{-3}$ and $5\times10^{-3}$.}
  \label{fig:bsc_dim}
\end{figure}
Using a line of best fit, we find that for noise rate $p=0.001$, the dimension of $\C^{\perp}$ versus the block size $N$ scales roughly as $O(N^{0.84})$.

This can be contrasted to the scheme by Hastings and Haah who used punctured Reed-Muller codes to achieve $\C^{\perp} = O(N^{0.91})$.
Note however that this dimension does not tell the whole story as this is only an upper bound on the dimension of the tri-orthogonal quantum code.
More importantly, there is a tradeoff in decoding complexity and performance when using Reed-Muller codes and successive decoders.
The better rate and lower encoding complexity make the polar code more favorable.

The polar code construction is only guaranteed to perform well if the rate of the code $\C$ is below the capacity of the channel it is designed for.
For the erasure channel $\sfw_p$ with erasure probability $p \in [0,1]$, the capacity is $(1-p) \in [0,1]$ and for the binary symmetric channel $\sfv_q$ with transition probability $q \in [0,1/2]$, the capacity is $1-h_2(q) \in [0,1]$, where $h_2\left(\cdot\right)$ is the binary entropy \cite{cover2012elements}.
However, the triply-even constraint implies that the code $\C(\i)$ will eventually encompass the entire space and exceed the maximum size of the polar code for a given noise rate that is guaranteed to perform well.
At that point, this construction is no longer valid as this will mean that the SC decoding can no longer be guaranteed to work.

\subsubsection{Error rates for the binary erasure channel}
\label{subsec:errRateErasure}
In this section, we shall prove that there exist punctures that do not compromise the performance of the code.
The polar code is not designed to maximize the distance, but rather to minimize the probability of decoding error under SC decoding.
A punctured polar code can be seen as a polar code that has suffered erasure errors -- although the punctured bits are not transmitted, the decoder can replace the punctured positions with erasure symbols given the locations of the puncture as side information.
If the noise channel that the code encounters is also an erasure channel, then these two processes -- puncturing and noise -- can be seen as the composition of two erasure channels.
This can be modeled easily thanks to the following observation whose proof is simple and therefore omitted.

\begin{lemma}
\label{lem:erasureJoin}
  The composition of two erasure channels $\sfw_p$ and $\sfw_q$ with erasure probabilities $p$ and $q$ respectively is an erasure channel $\sfw_r$ with erasure probability $r = p + q - pq$.
\end{lemma}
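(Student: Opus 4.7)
The plan is to unfold the definition of channel composition and show that the two-stage process has exactly two possible outcomes with the right probabilities. First I would fix an input $x \in \field_2$ and apply $\sfw_p$ to obtain $x$ with probability $1-p$ and the erasure symbol $\perp$ with probability $p$. I would then specify how $\sfw_q$ acts on the extended alphabet $\field_2 \cup \{\perp\}$: it acts as the usual erasure channel on $\field_2$ and as the identity on $\perp$ (an already-erased symbol cannot be un-erased).

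Next I would enumerate the three mutually exclusive paths from input to output: (i) $\sfw_p$ outputs $x$ and $\sfw_q$ outputs $x$ with joint probability $(1-p)(1-q)$; (ii) $\sfw_p$ outputs $x$ and $\sfw_q$ erases it with probability $(1-p)q$; (iii) $\sfw_p$ already outputs $\perp$ with probability $p$, and this is preserved by $\sfw_q$. Summing the last two cases, the composite output is $\perp$ with probability $p + (1-p)q = p + q - pq$, and by complementarity the output equals $x$ with probability $(1-p)(1-q) = 1-(p+q-pq)$.

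Since the composed channel produces only $x$ or $\perp$, and the probability of $\perp$ is independent of $x$, the composition matches the definition of $\sfw_r$ with $r = p + q - pq$. There is no real obstacle here beyond being explicit about the convention that an erasure symbol propagates through the second channel; the algebraic identity $1-(1-p)(1-q) = p+q-pq$ is the only computation required.
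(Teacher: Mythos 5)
Your proof is correct, and it is the natural argument. Note that the paper itself does not give a proof of this lemma --- it states ``whose proof is simple and therefore omitted'' --- so there is nothing to compare against; the case split you give (erased at the first stage, erased only at the second stage, never erased) together with the observation that $\sfw_q$ must fix $\perp$ is exactly the calculation the authors are implicitly relying on, and the algebra $1-(1-p)(1-q)=p+q-pq$ closes it.
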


Let $\C$ be a polar code whose dual is triply-even and is designed for the erasure channel $\sfw_r$.
We denote the threshold of $\C$ by $\epsilon \in [0,1]$.
For $p,q \in [0,1]$ such that $p + q -pq = r$, we puncture our polar code $\C$ randomly using an erasure channel $\sfw_{q}$ and it is then subject to erasure noise $\sfw_{p}$.

To show that there exists a good punctured code, we employ a derandomization argument.
\begin{lemma}
  For $\epsilon_0 > \epsilon$, there exists a $\delta \in [0,1]$ such that we can choose with probability $\delta$ a punctured polar code $\tC_{\epsilon_0}$ whose bit error rate is upper bounded by $\epsilon_0$ against erasure noise $\sfw_{p}$ for $p < r$.
\end{lemma}
\begin{proof}
For $q \in [0,1]$, let $\sfw_q$ denote the erasure channel that creates the puncture.
We first apply the channel $\sfw_q^{N}$ which serves to create the punctured code.
Then we apply $\sfw_p^{N}$ which is the noise that the punctured code is subject to.
According to lemma \ref{lem:erasureJoin}, the effective channel that $\C$ is subject to is $\sfw_{r}^{N}$, where $r = p+q -pq$.
Upon transmitting the message $u \in \field_2^N$ across this channel, this could result in erasure patterns that we denote $P,Q \in \field_2^{N}$ where location $a$ has suffered an erasure error if and only if $(P \vee Q)_a = 1$.
Let $\Pr\{u_a \neq \uhat_a| P\vee Q\}$ be the probability that SC decoding results in a decoding error for location $a$.

Since the threshold of $\C$ is $\epsilon$, we can upper bound the bit error rate as
\begin{align}
  \avg_{Q} \avg_{P} \Pr\{u_a \neq \uhat_a| P \vee Q\} \leq \epsilon~,
\end{align}
where $\avg_{Q}$ and $\avg_{P}$ denote the expectation value over random erasure patterns $Q,P \in \field_2^N$.

For $\delta \in [0,1]$, we may now apply a Markov inequality to upperbound the probability of picking a `bad' erasure pattern $Q$
\begin{align}
  \Pr_{Q}\left\{ \avg_{P} \Pr\{u_a \neq \uhat_a| P\vee Q\} \geq \epsilon_0 \right\} \leq \delta~,
\end{align}
where we define the threshold $\delta = \epsilon/\epsilon_0$.
Hence for $\epsilon_0 > \epsilon$, there exist punctures $P$ such that the probability of failure for any bit is upper bounded by $\epsilon_0$.
\end{proof}

Let $\C$ be the smallest polar code such that $\C^{\perp}$ is triply-even, and let $\epsilon$ be its threshold.
In fig. \ref{fig:bec_err} below, we plot the corresponding log-likelihood ratios (LLRs) vs the log of the block size.
In this case, the LLR corresponding to $\C$ is $\log_2((1-\epsilon)/\epsilon)$.

We find that the probability of decoding error drops significantly as the block-size increases.
This suggests that for block sizes of interest, it might suffice to increase the block size of the polar code rather than perform concatenation as in the Bravyi-Haah distillation scheme.
We can see that the code designed for $p = 0.01$ has an error rate several orders of magnitude better for block sizes of interest.
Thus for a minor tradeoff in the dimension with respect to codes designed for $p=0.02$ and $p=0.05$, the code designed for $p=0.01$ sees a tremendous benefit with respect to the error rate.
We can see that for a block size of $2^{18} = 262,144$ at erasure error rate $0.01$, we can achieve a bit error rate of roughly $2^{-90} \approx 8 \times 10^{-28}$.

\begin{figure}[h]
  \centering
  \begin{tikzpicture}
      \node (image) at (0,0) {\includegraphics[scale=0.5]{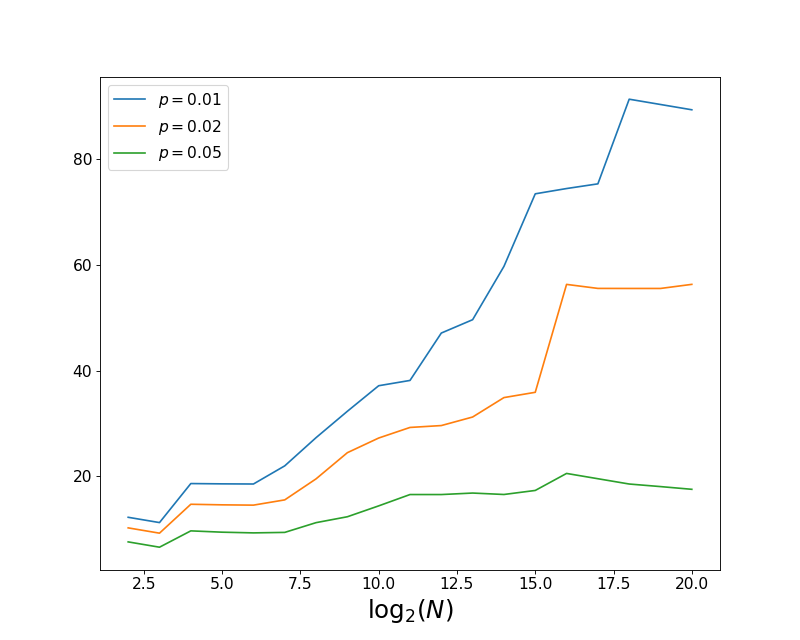}};
      \node [rotate=90] at (-5.5,0) {$\operatorname{LLR}$};
      \fill [white] (-1,-5) rectangle (1.3,-4.3);
      \node at (0.3,-4.7) {$\log_2(N)$};
  \end{tikzpicture}
  \caption{Variation of the log-likelihood ratios (LLRs) vs the log of the block-size for erasure probabilities $p = 0.01$, $p=0.02$ and $p=0.05$.}
  \label{fig:bec_err}
\end{figure}

\subsubsection{Error rates for the binary symmetric channel}
\label{subsec:errRateBSCp}
In the event that the noise channel that the quantum code is subject to is a dephasing channel, then we must deal with a composition of a binary erasure channel (which creates the puncture) and a binary symmetric channel (which is the noise).
Before doing so, it will be convenient to introduce the notion of channel degradability as in \cite{bardet2016algebraic}.
For some sets of alphabets $\X$, $\Y$ and $\Z$, we say that a channel $\sfu: \X \to \Y$ can be \emph{degraded} to a channel $\sfut:\X \to \Z$ if there exists a channel $\sfq:\Y \to \Z$ such that
\begin{align*}
  \sfut(z|x) = \sum_{y\in \Y} \sfq(z|y)\sfu(y|x)~.
\end{align*}
If such a channel $\sfq$ exists, this is denoted $\sfut \cleq \sfu$.
Equivalently, we can say that $\sfu$ can be degraded to $\sfu'$ if there exists a channel $\sfq$ that uses $\sfu$ as a sub-routine to simulate $\sfut$.

We use the following lemma, each of whose claims has been proved earlier.
\begin{lemma}
\label{lem:orderDegrade}
For channels $\sfu:\X \to \Y$ and $\sfut:\X \to \Y$,
\begin{enumerate}
  \item (\cite{richardson2008modern} (p207)) the reliability of $\sfu$ is at least as good as that of $\sfut$, i.e.
  \[
    \sfut \cleq \sfu \implies \B(\sfu) \leq \B(\sfut)~;
  \]
  \item (\cite{kobara2009code} (lemma 4.7)) for all block sizes $N$, the synthetic channels inherit this relation, i.e. for $a \in [N]$,
  \[
    \sfut \cleq \sfu \implies \sfut^{(a)} \cleq \sfu^{(a)}~.
  \]
\end{enumerate}
\end{lemma}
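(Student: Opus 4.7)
The plan is to exploit the representation $\sfut(z|x) = \sum_{y \in \Y} \sfq(z|y)\sfu(y|x)$ guaranteed by $\sfut \cleq \sfu$ and push both claims through it. Part 1 will follow from a single application of the Cauchy--Schwarz inequality, while part 2 will be an induction on the polarization depth using the recursive definition of $\sfu^{(a)}$.

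For part 1, I would fix the output symbol $z$ and apply Cauchy--Schwarz to the vectors with components $a_y = \sqrt{\sfq(z|y)\sfu(y|0)}$ and $b_y = \sqrt{\sfq(z|y)\sfu(y|1)}$. The sums of squares satisfy $\sum_y a_y^2 = \sfut(z|0)$ and $\sum_y b_y^2 = \sfut(z|1)$, while the cross sum equals $\sum_y a_y b_y = \sum_y \sfq(z|y)\sqrt{\sfu(y|0)\sfu(y|1)}$, yielding
\[
  \sum_{y \in \Y}\sfq(z|y)\sqrt{\sfu(y|0)\sfu(y|1)} \;\leq\; \sqrt{\sfut(z|0)\sfut(z|1)}.
\]
Summing over $z$, exchanging the order of summation on the left, and using the fact that $\sfq$ is a stochastic map (so $\sum_z \sfq(z|y) = 1$) gives $2\B(\sfu) \leq 2\B(\sfut)$, which is the desired inequality after dividing by $2$.

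For part 2, I would induct on the depth $n$ of the polar recursion defining $\sfu^{(a)}$. The base case is the single Arikan transform: for the minus channel $\sfu^{-}(y_1, y_2|u_1) = \half\sum_{u_2}\sfu(y_1|u_1 \oplus u_2)\sfu(y_2|u_2)$, substituting $\sfut = \sfq \circ \sfu$ and interchanging summations shows that $\sfut^{-}(z_1,z_2|u_1) = \sum_{y_1,y_2}\sfq(z_1|y_1)\sfq(z_2|y_2)\,\sfu^{-}(y_1,y_2|u_1)$, so $\sfut^{-} \cleq \sfu^{-}$ via the simulator $\sfq \otimes \sfq$. For the plus channel $\sfu^{+}(y_1,y_2,u_1|u_2) = \half \sfu(y_1|u_1 \oplus u_2)\sfu(y_2|u_2)$, the same substitution gives $\sfut^{+} \cleq \sfu^{+}$ via $\sfq \otimes \sfq \otimes \1$, where the identity passes the auxiliary output $u_1$ through unchanged. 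Composing these base-case simulators along the $n$-bit binary expansion specifying the index $a$ yields $\sfut^{(a)} \cleq \sfu^{(a)}$ at depth $n$.

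The main obstacle I anticipate is in part 2, specifically ensuring that the auxiliary output $u_1$ of the plus transform is treated consistently by the inductive simulator --- one must verify that regarding $u_1$ as observed side information of $\sfu^{+}$ commutes with applying $\sfq$ only to the $y_i$ components, and that this bookkeeping remains consistent as the recursion nests. Once this is made precise for a single step, the induction is essentially mechanical, since each polarization level contributes exactly one further tensor factor of $\sfq$ or of $\1$ to the compounded simulator.
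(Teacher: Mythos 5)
The paper does not actually prove this lemma; it only cites external references (Richardson--Urbanke for part~1 and the cited reference's Lemma~4.7 for part~2), so there is no in-paper argument to compare against. Your proof is correct and is the standard argument found in those references: for part~1, Cauchy--Schwarz applied to $a_y=\sqrt{\sfq(z|y)\sfu(y|0)}$ and $b_y=\sqrt{\sfq(z|y)\sfu(y|1)}$, followed by summing over $z$ and invoking $\sum_z\sfq(z|y)=1$, gives $\B(\sfu)\leq\B(\sfut)$; for part~2, the one-step computation showing $\sfut^{-}\cleq\sfu^{-}$ via $\sfq\otimes\sfq$ and $\sfut^{+}\cleq\sfu^{+}$ via $\sfq\otimes\sfq\otimes\1$, composed along the binary expansion of the index $a$, gives the result at every depth. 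The concern you flag about the auxiliary output $u_1$ of the plus transform is not in fact an obstacle: degradation is defined as post-processing of the channel \emph{output}, and $u_1$ is simply part of the output alphabet of $\sfu^{+}$, so the identity map on that coordinate is a legitimate (trivial) component of the degrading channel. As the recursion nests, the previously-decoded $u$'s accumulate additional identity tensor factors in the compounded simulator, and nothing interferes with the $\sfq$ factors acting on the genuine $y$-outputs. Nothing is missing from your argument.
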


In other words, if $\sfu$ can be degraded to $\sfu'$, then $\sfu$ is at least as reliable as $\sfu'$.
Using the notion of degradability, we can simplify the analysis for the binary symmetric channel by degrading the erasure channel using the following lemmas.

\begin{lemma}
\label{lem:degradeErasureToBSCp}
Let $p \in [0,1]$ be some erasure probability.
\begin{enumerate}
  \item The erasure channel $\sfw_p$ of erasure probability $p$ is degradable to a binary symmetric channel of transition probability $p/2$, i.e.
  \begin{align*}
    \sfv_{p/2} \cleq \sfw_{p}~.
  \end{align*}
  \item The composition of binary symmetric channels $\sfv_{a}$ and $\sfv_{b}$ is a binary symmetric channel $\sfv_r$ where $r = p + q -2pq$.
\end{enumerate}
\end{lemma}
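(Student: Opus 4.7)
The plan is to prove each of the two claims by explicit construction and direct computation; both are essentially bookkeeping with conditional probabilities, but stating them cleanly requires being careful about what channel is being degraded to what.

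For claim (1), I would construct an explicit degrading channel $\sfq: \{0,1,\perp\} \to \{0,1\}$ witnessing $\sfv_{p/2} \cleq \sfw_p$. The natural choice is the randomized erasure-filler: $\sfq$ passes $0$ and $1$ through unchanged, and on input $\perp$ outputs a uniformly random bit. Then for $x \in \field_2$, composing $\sfw_p$ with $\sfq$ gives output $x$ with probability $(1-p) + p \cdot \tfrac{1}{2} = 1 - p/2$ and output $\bar{x}$ with probability $p \cdot \tfrac{1}{2} = p/2$, which is exactly $\sfv_{p/2}$. Invoking the definition of degradability given just above the lemma statement then finishes the claim.

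For claim (2), I would simply compose $\sfv_a$ and $\sfv_b$ by marginalizing over the intermediate bit. For input $x$, the probability that the final output equals $x$ is the probability that an even number (zero or two) of flips occur, namely $(1-a)(1-b) + ab = 1 - a - b + 2ab$, while the probability of outputting $\bar{x}$ is $a(1-b) + (1-a)b = a + b - 2ab$. This is independent of $x$, so the composite channel is indeed a binary symmetric channel, with transition probability $r = a + b - 2ab$ (the statement's use of $p,q$ in place of $a,b$ is a minor notational drift).

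Neither step poses a real obstacle; the only subtle point is to make sure claim (1) is stated in the correct direction. Concretely, one must verify that it is $\sfv_{p/2}$ that is degradable from $\sfw_p$ (and not the reverse), since the subsequent use of \cref{lem:orderDegrade} requires the inequality $\B(\sfw_p) \leq \B(\sfv_{p/2})$ to compare reliabilities of synthetic channels. The explicit $\sfq$ above makes the direction transparent, and together with claim (2) this reduces the analysis of the puncture-plus-dephasing noise model to a pure binary symmetric channel analysis, as needed in \cref{subsec:errRateBSCp}.
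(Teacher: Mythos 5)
Your proof is correct and follows essentially the same route as the paper's: claim (1) is shown by having the intermediate channel fill in erased symbols with a uniformly random bit (your explicit $\sfq$ is simply the paper's ``replace $\perp$ by a random symbol'' made into a formal degrading map), and claim (2) is the same even/odd flip-counting calculation. You are also right that the lemma statement's $r = p + q - 2pq$ should read $r = a + b - 2ab$, and that the direction of degradability is exactly what is needed to apply \cref{lem:orderDegrade}.
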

\begin{proof}
We shall prove each of these statements in turn.
\begin{enumerate}
  \item Consider transmitting a bit across the binary erasure channel $\sfw_p$ of erasure probability $p$.
  With probability $p$, the bit is erased and we may replace it by a random symbol $0$ or $1$.
  Hence with probability $p$, the effective channel $\Z$ is a binary symmetric channel with probability $1/2$ and is the ideal channel otherwise.
  For $x \in \field_2$, we have
  \begin{align*}
    \Z(x) &= (1-p)\sfv_{0}(x) + p\sfv_{1/2}(x)\\
          &= (1-p)[x] + \frac{p}{2}[x] + \frac{p}{2}[\bar{x}]\\
          &= \left(1-\frac{p}{2}\right)[x] + \left( \frac{p}{2}\right)[\bar{x}]~.
  \end{align*}
  Thus $\Z$ is equivalent to the binary symmetric channel $\sfv_{r}$, where $r = p/2$.

  \item For $x \in \field_2$, the action of the composition $\sfv_{a}\circ\sfv_{b}$ is 
  \begin{align*}
    \left(\sfv_{a} \circ \sfv_{b}\right)(x) &= \sfv_{a}\left((1-b)[x] + b[\bar{x}] \right)\\
                                            &= \left((1-a)(1-b) + ab\right)[x] + \left(a(1-b) + (1-a)b \right)[\bar{x}]\\
                                            &= \left(1-a-b+2ab\right)[x] + \left(a + b -2ab \right)[\bar{x}]~.
  \end{align*}
\end{enumerate}
\end{proof}

Of course, estimating the probability of failure by degrading the erasure channel to a binary symmetric channel only yields a lower bound.
This is because we know the erasure locations when we transmit across the BEC but this is no longer true when we code for the binary symmetric channel.

Let $\C$ be a polar code whose dual is triply-even and is designed for the composition of the erasure channel $\sfw_q$ and binary symmetric channel $\sfv_p$.
We denote the threshold of $\C$ by $\epsilon \in [0,1]$.
For $p,q \in [0,1]$ such that $p + q/2 -pq = r$, we puncture our polar code $\C$ randomly using an erasure channel $\sfw_{q}$ and it is then subject to the binary symmetric channel $\sfv_{p}$.

As was done for the erasure channel, to show that there exists a good punctured code, we employ a derandomization argument.
\begin{lemma}
  For $\epsilon_0 > \epsilon$, there exists a $\delta \in [0,1]$ such that we can choose with probability $\delta$ a punctured polar code $\tC_{\epsilon_0}$ whose bit error rate is upper bounded by $\epsilon_0$ against erasure noise $\sfw_{p}$ for $p < r$.
\end{lemma}
\begin{proof}
  The code $\C$ is designed for the composition $\Z := \sfw_q \circ \sfv_p$ of the erasure channel $\sfw_q$ and the binary symmetric channel $\sfv_p$.
  The binary erasure channel $\sfw_q$ can be degraded to a binary symmetric channel $\sfv_{q/2}$ according to lemma \ref{lem:degradeErasureToBSCp}.
  Noting that the composition of the channels $\sfv_{p}$ and $\sfv_{q/2}$ is yet again a binary symmetric channel $\sfv_{r}$, where $r = p + q/2 - pq$, we have $\sfv_{r} \cleq \Z$.
  It follows then from lemma \ref{lem:orderDegrade} that the synthetic channels obtained by designing polar codes for $\Z$ can be degraded to those synthetic channels obtained by designing polar codes for $\sfv_{r}$.

  The proof now follows similarly to the case of the erasure channel we have dealt with.

  Transmitting the message $u\in \field_2^N$ across $\sfv_{r}$ could result in error patterns that we denote $P,Q \in \field_2^{N}$ where location $a$ has suffered an erasure error if and only if $(P + Q)_a = 1$.
  Let $\Pr\{u_a \neq \uhat_a| P + Q\}$ be the probability that SC decoding results in a decoding error for location $a$.

  Since the threshold of $\C$ is $\epsilon$, we can upper bound the bit error rate as
  \begin{align}
    \avg_{Q} \avg_{P} \Pr\{u_a \neq \uhat_a| P + Q\} \leq \epsilon~,
  \end{align}
  where $\avg_{Q}$ and $\avg_{P}$ denote the expectation value over random error patterns $Q,P$.

  For $\delta \in [0,1]$, we may now apply a Markov inequality to upperbound the probability of picking a `bad' erasure pattern $Q$
  \begin{align}
    \Pr_{Q}\left\{ \avg_{P} \Pr\{u_a \neq \uhat_a| P + Q\} \geq \epsilon_0 \right\} \leq \delta~,
  \end{align}
  where we define the threshold $\delta = \epsilon/\epsilon_0$.
  Hence for $\epsilon_0 > \epsilon$, there exist punctures $P$ such that the bit error rate for any bit is upper bounded by $\epsilon_0$.
\end{proof}

Let $\C$ be the smallest polar code such that $\C^{\perp}$ is triply-even, and let $\epsilon$ be its threshold.
In figure \ref{fig:err_bsc} below, we plot $-\log_{2}(\epsilon)$ versus the log of the block size of the code $\C$ for noise rates $0.001$, $0.002$ and $0.005$.
For the noise rates we are studying, we expect the performance of the polar codes designed for the binary symmetric channel to be worse than those designed for the binary erasure channel.
We can see that for block sizes of $2^{16} = 65,536$ for a noise rate of $0.001$, we can achieve a bit error rate of roughly $2^{-46} \approx 7 \times 10^{-15}$.

\begin{figure}[h]
  \centering
  \begin{tikzpicture}
      \node (image) at (0,0) {\includegraphics[scale=0.5]{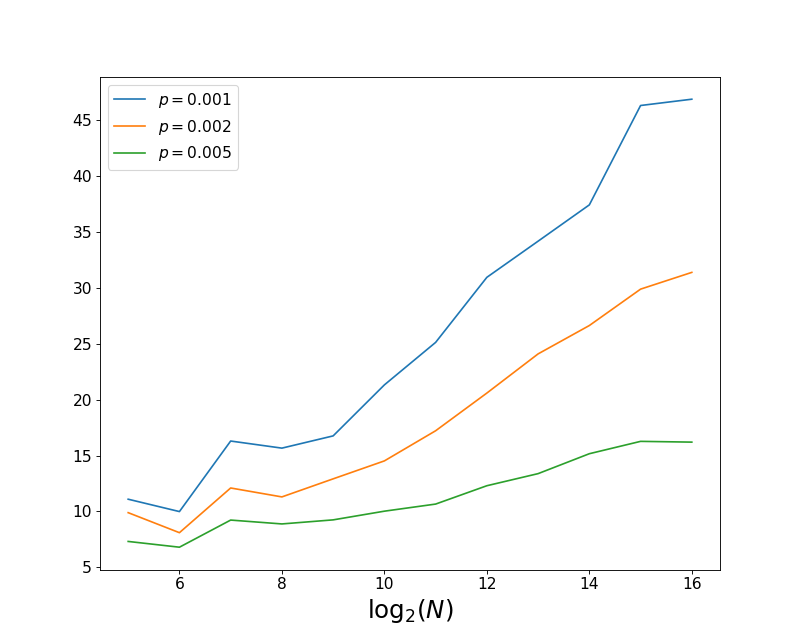}};
      \node [rotate=90] at (-5.5,0) {$-\log($Bhattacharyya parameter$)$};
      \fill [white] (-1,-5) rectangle (1.3,-4.3);
      \node at (0.3,-4.7) {$\log_2(N)$};
  \end{tikzpicture}
  \caption{Log-log plot of the Bhattacharyya parameters vs the block-size for transition probabilities $p = 0.001$, $p=0.002$ and $p=0.005$.}
  \label{fig:err_bsc}
\end{figure}

\section{Conclusions}
We have demonstrated how to use polar codes to construct tri-orthogonal codes, which in turn can be used for magic state distillation using the framework of Bravyi and Haah.
We introduced new theoretical tools from recent advances in classical coding theory which we hope will help in the broader study of quantum error correcting codes.
In addition, using polar codes allows us to use the low complexity successive cancellation decoder which functions in $O(N\log N)$ time for a polar code of block size $N$.
By puncturing these codes, we can achieve a growing dimension for the number of encoded logical qubits.
This decoder is known to have a better error correction capacity than decoders for Reed-Muller codes.
The dimension of these triply-even codes is shown to scale like $O(N^{0.8})$ for the binary erasure channel at noise rate $0.01$ and $O(N^{0.84})$ for the binary symmetric channel at noise rate $0.001$.
Finally it was shown that we can upper bound the probability of failure of decoding a bit using the Bhattacharyya parameters for the corresponding classical codes.
The error probability for the triply-even codes can drop to roughly $8\times10^{-28}$ for the erasure channel at a block size of $2^{20} = 262,144$ and $7 \times 10^{-15}$ for the dephasing channel at a block size of $2^{16} = 65,536$.

\textbf{Future directions:} It is well known that polar codes equipped with the successive cancellation decoder require large block-lengths to be effective.
In order to address this problem, Tal and Vardy have devised a successive cancellation list decoder \cite{tal2015list}.
Together with some pre-encoding with cyclic repetition codes, the performance of these codes is significantly improved even for smaller block sizes.
For practical implementations, it would be interesting to see how these ideas can be used to improve the performance of polar codes for magic state distillation. It would also be interesting to perform numerical simulations to see how the performance of punctured polar codes directly compares to punctured Reed-Muller codes.

\section{Acknowledgements}
We would like to thank David Poulin for many helpful discussions.
In particular, we would like to thank him for drawing our attention to the argument in section \ref{sec:distillationprocedure} contrasting error correction and post-selection.
We would also like to thank Colin Trout for detailed feedback on an earlier draft of this work, and Pavithran Iyer and Maxime Tremblay for many helpful comments.
A.K. would like to thank the MITACS organization for the Globalink award which facilitated his visit to Inria, Paris and Inria, Paris for their hospitality during his visit.
A.K. also acknowledges support from the Fonds de Recherche du Qu\'ebec - Nature et Technologies (FRQNT) via the B2X scholarship for doctoral candidates.
Computations were made on the supercomputers managed by Calcul Qu\'ebec and Compute Canada.
The operation of these supercomputers is funded by the Canada Foundation for Innovation (CFI),
the Minist\`ere de l'Economie, de la Science et de l'Innovation du Qu\'ebec (MESI) and the FRQNT.
JPT acknowledges the support of the European Union and the French Agence Nationale de la Recherche through the QCDA project.

\bibliographystyle{unsrtabbrev}
\bibliography{references}
\end{document}